\newtheorem{theorem}{Theorem}
\newtheorem{definition}{Definition}
\newtheorem{lemma}{Lemma}
\newtheorem{proposition}{Proposition}
\newtheorem{assumption}{Assumption}
\newtheorem{corollary}{Corollary}
\title{\LARGE \bf
	Resilient Distributed Averaging
}
\author{Mostafa Safi and Seyed Mehran Dibaji
		\thanks{Mostafa Safi is with the Department of Aerospace Engineering, Amirkabir University of Technology, Tehran, Iran,
		{\tt\small halebi@aut.ac.ir, mostafa.safi.office@gmail.com}}%
	\thanks{Seyed Mehran Dibaji is with the Department Mechanical Engineering, Massachusetts Institute of Technolgy, Cambridge, MA, USA,
		{\tt\small dibaji@mit.edu}}
			
}
\begin{document}

	\maketitle

	\begin{abstract}
		In this paper, a fully distributed averaging algorithm in the presence of adversarial Byzantine agents is proposed. The algorithm is based on a resilient retrieval procedure, where all non-Byzantine nodes send their own initial values and retrieve those of other agents. We establish that the convergence of the proposed algorithm relies on \textit{strong robustness} of the graph for locally bounded adversaries. A topology analysis in terms of time complexity and relation between connectivity metrics is also presented. Simulation results are provided to verify the effectiveness of the proposed algorithms under prescribed graph conditions.
	\end{abstract}

\begin{IEEEkeywords}
Averaging, resilient consensus, distributed algorithm, fault detection, robust graph, time complexity, Byzantine
\end{IEEEkeywords}
%
\IEEEpeerreviewmaketitle
	
	\section{Introduction}\label{Sec: Intro}
	\IEEEPARstart{A}{fter} Stuxnet and the Ukraine power system cyber-attacks, there is a growing awareness for the need to incorporate methods from systems and control in the area of secure and
	resilient control of cyber-physical systems with several promising directions and
	preliminary results. Due to increasing
	connectivity and necessity of wide-area communications, such systems are subject to a range of well-planned and deliberate attacks that might take over the signals and lead
	into irrecoverable and irreparable harmful effects on the physical infrastructures \cite{dibaji2019systems}. 
	
	As a special case of cyber-physical systems, multi-agent systems, which refer to networks of dynamic agents interacting locally with each other to achieve global goals, received considerable attentions in the last two decades \cite{moreau2005stability,jadbabaie2003coordination,cai2012average,
	kashyap2007quantized,dibaji2018TAC}. One of the key problems in multi-agent systems is the so-called consensus, where the objective is that all agents agree upon a common value. As a particular example, average consensus, where the consensus value is the average of the initial values of the agents, has been investigated for undirected graphs \cite{jadbabaie2003coordination} and directed graphs \cite{cai2012average} based on deterministic or randomized update rules \cite{kashyap2007quantized}. There are applications where averaging plays a crucial role. Note that for achieving averaging, the agents must maintain the sum of their values to be constant over time. For example, they represent the total energy in power systems in \cite{Hadjicostis}.

	In this paper, our focus is on security for averaging dynamics. The work \cite{mo2017privacy} investigates the problem of privacy in average consensus settings. Another type of attacks is the so-called \textit{Byzantine} attacks, where some of the agents are hijacked by an outside attacker and send inconsistent and misguiding data to their neighbors in order to avoid other agents reaching consensus. This problem has been solved using installed observers at the agent level and with the expense of high computation costs in \cite{sundaram2011distributed} (see also \cite{pasqualetti2012consensus, meskin2009actuator} for related problems), where each agent, in the presence of bounded number of Byzantine nodes, retrieves the initial values of other agents and detects the faulty agents by making use of alternative paths in the graph to perform the task of the average consensus. The node connectivity of the underlying network is the condition to fully retrieve the initial states, assuming that each non-Byzantine node has a full knowledge of the topology of the graph and significant computational capabilities. 
	
	In the literature, the so-called MSR algorithms have been developed for resilient consensus, where each normal agent eliminates the most deviated agents in the updates. This class of algorithms has been extensively used in computer science \cite{azadmanesh2002asynchronous, bouzid2010optimal, vaidya2012iterative} as well as control \cite{leblanc2013resilient, dibaji2015consensus}. In this paper, we have a different approach to solve the average consensus problem. Based on Certified Propagation Algorithm (CPA), we propose a retrieval procedure for the purpose of average consensus in which, in contrast with \cite{sundaram2011distributed}, the global knowledge of network topology and high computational capabilities of each agent are not required. One reason for this drastic change is that the retrieval process does not involve the agents dynamics. However, the cost of such fully distributed algorithms arises in the more restrictive topology which will be shown in terms of strongly robust graphs as part of the convergence condition. In fact, we use the notion of graph robustness \cite{zhang2012robustness} to ensure that regular nodes achieve consensus in a distributed manner, where each regular node is affected by at most $f$ Byzantine node among its neighbors and does not have access to any global knowledge about topology except the value of $f$.

 Convergence of CPA in the presence of Byzantine adversaries has been also discussed under other topological conditions rather than strong robustness \cite{koo2004broadcast,pelc2005locallybounded,bhandari2009reliable,
	ichimura2010new,zhang2012robustness,tseng2015broadcast}. The most recent work in this area proposes a condition which is based on partitioning the nodes into regular and adversarial nodes \cite{tseng2015broadcast}. As a result, checking the condition can be done only by the blind assumption that any induced subgraph could be adversarial. We show that the time complexity for checking this condition in large scale networks is significantly higher than the case of strongly robust graphs. 
	
	Moreover, while \cite{zhang2012robustness} generally considers resiliency of information diffusion in a network against \textit{Malicious} agents, we propose a distributed algorithm which solves averaging problem in the presence of Byzantine agents\footnote{In computer science literature, Byzantine nodes are capable to send various false values to each of their neighbors, while malicious nodes can only broadcast a single faulty value to all the neighbors.}. Our algorithms are also capable to detect the adversarial nodes using an extra memory. Considering asynchrony and delays in communications and discussion on connectivity notions of a graph are also parts of our contribution in this paper. Based on the realistic assumptions we set, our algorithms and update rules are easy to implement in practice. Furthermore, although we focus on the distributed average consensus, our retrieval method can be utilized to solve more general purpose problems such as resilient distributed function calculation without any modification. 
	
	The rest of the paper is as follows. In Section~\ref{sec.pre}, we state the preliminaries and problem setup. Section~\ref{Sect: MainSection} is devoted to the main results in distributed resilient averaging over synchronous and asynchronous updates. A time complexity analysis and relation between different connectivity metrics are provided in Section~\ref{sec.topology}. Simulations are put in Section \ref{Sect: SimulationExample}. Finally, Section \ref{Sec: Conclusion} concludes the paper. This paper is the extended version of \cite{dibaji2019resilient}. We provide some new results and the eliminated proofs in Section \ref{Sect: MainSection} and add the whole Section~\ref{sec.topology}.
	\section{Preliminaries and Problem Statement} \label{sec.pre}
	
	\subsection{Graph Theory}
	A digraph is represented by $\mathcal{D} = (\mathcal{V},\mathcal{E})$, where the set of nodes and edges are represented by $\mathcal{V} = \{ 1, \ldots ,N \}$ and $\mathcal{E} \subseteq \mathcal{V} \times \mathcal{V}$, respectively. Accordingly, the graph $\mathcal{D}_m = (\mathcal{V}_m,\mathcal{E}_m)$ is a subgraph of $\mathcal{D}$ if $\mathcal{V}_m \subseteq \mathcal{V}$ and $\mathcal{E}_m \subseteq \mathcal{E}$. An induced subgraph is such that $\mathcal{E}_m =\mathcal{E} \cap (\mathcal{V}_m \times \mathcal{V}_m)$. An edge from node $j$ pointing to node $i$ implies data transmission from node $j$ to node $i$ and is denoted by $(j,i)$. The set of incoming and outgoing neighbors of node $i$ are the set $\mathcal{N}_i^- = \{j \vert (j,i) \in \mathcal{E} \}$ and $\mathcal{N}_i^+ = \{j \vert (i,j) \in \mathcal{E} \}$, respectively. A node $j$ is said to be an outgoing neighbour of node $i$ if $(i,j) \in \mathcal{E}$. Also, $\mathcal{D}$ may be undirected if all of its edges are bidirectional, i.e. if $(i,j) \in \mathcal{E}$, then $(j,i) \in \mathcal{E}$, too. For a given $\mathcal{D}$, its underlying graph $\mathcal{G}_\mathcal{D}$ is the one obtained by removing the directions of all edges in $\mathcal{D}$. The number of incoming edges of node $i$ is denoted by $d_{\text{in}}(i)$. A complete graph of order $N$ is a graph in which $(i,j) \in \mathcal{E}$ for all $i,j \in \mathcal{V}$ and is denoted by $\mathcal{K}_N$. A path is a sequence $(v_1,v_2, \ldots ,v_p)$ in which $(v_i,v_{i+1}) \in \mathcal{E}$, where $p>1$ and $i=1, \ldots ,p-1$. Two paths from $v_1$ to $v_p$ are node-disjoint, if they do not share any vertices other than $v_1$ and $v_p$.
	
	In our paper, the key topological notion is the so-called strong robustness. Robustness is a connectivity measure of digraphs, which has been previously used in the literature of resilient distributed computations over networks \cite{zhang2012robustness,mitra2018secure}. The notion of robust graphs is defined as follows.
	
	\begin{definition} \rm \label{df.reach}
		($r$-reachable set) For a digraph $\mathcal{D}$, a subset $\mathcal{S}$ of its nodes is said to be an $r$-reachable set if $\exists i \in \mathcal{S}$ such that $\vert \mathcal{N}_i^- \setminus \mathcal{S} \vert \geq r$, where $r \in \mathbb{Z}_{\geq 1}$.
	\end{definition}
	
	\begin{definition} \rm \label{df.rob}
		($r$-robust graph) A graph $\mathcal{D}$ is $r$-robust if for every pair of nonempty, disjoint subsets of $\mathcal{V}$, at least one of the subsets is $r$-reachable, where $r \in \mathbb{Z}_{\geq 1}$.
	\end{definition}
	
	Moreover, there are other variants of robust graphs, used in \cite{zhang2012robustness} and \cite{mitra2018secure}, for analyzing the resiliency of a digraph against adversarial nodes.
	
	\begin{definition}  \rm \label{df.Srobw}
		(Strongly $r$-robust graph w.r.t. $\mathcal{S}$) For a digraph, a set of nodes $\mathcal{S} \subset \mathcal{V}$ and $r \in \mathbb{Z}_{\geq 1}, r \leq \vert \mathcal{S} \vert$, we say that $\mathcal{D}$ is strongly $r$-robust with respect to $\mathcal{S}$, if for any nonempty subset $\mathcal{C} \subseteq \mathcal{V} \setminus \mathcal{S}$, $\mathcal{C}$ is $r$-reachable. 
	\end{definition}
	
	\begin{definition}  \rm \label{df.Srob}
		(Strongly $r$-robust graph) A digraph $\mathcal{D}$ is strongly $r$-robust if for any nonempty subset $\mathcal{S} \subseteq \mathcal{V}$, either $\mathcal{S}$ is $r$-reachable or $\exists i \in \mathcal{S}$ such that $\mathcal{V} \setminus \mathcal{S} \subseteq \mathcal{N}_i^-$, where $r \in \mathbb{Z}_{\geq 1}$ and $r \leq \lceil N/2 \rceil$.
	\end{definition}

We also use the original definition of connectivity from \cite{geller1971connectivity}. For the connectivity of a graph, there are two ways to measure them: One is node connectivity and the other is edge connectivity \cite{gross2005graph}. Here, we focus on the former notion. 

For digraphs, as a variety of connectivity notions can be defined, node-connectivity is a much more complicated measure. The following definition presents these categories and introduces the connecitivity measures thoroughly.
\begin{definition} \rm \label{df.con}
(Connectivity) A digraph $\mathcal{D}$ is \textit{strongly connected} if 
for every pair of nodes, they are mutually reachable; it is \textit{unilaterally} connected if every node can either reach or be reached from every other node; and it is \textit{weakly} connected if the underlying graph $\mathcal{G}_\mathcal{D}$ is connected. Digraphs which are not in any of the aforementioned groups are \textit{disconnected}. 

Note that each digraph $\mathcal{D}$ belongs to one of the connectivity categories $C_i, i=0,1,2,3$, defined as follows:

\centering
\begin{tabular}{p{50pt}p{145pt}}
	  \parbox{70pt}{\vspace{5pt}\textbf{Category}} 
	& \parbox{200pt}{\vspace{5pt}\textbf{Type of Graphs}}\\
	  \hline
	  \parbox{70pt}{\vspace{5pt} $~~~~~C_0$}
	& \parbox{200pt}{\vspace{5pt} disconnected}\\
	  \parbox{70pt}{\vspace{5pt} $~~~~~C_1$}
	& \parbox{200pt}{\vspace{3.5pt} weak but not unilateral}\\
	  \parbox{70pt}{\vspace{5pt} $~~~~~C_2$}
	& \parbox{200pt}{\vspace{5.5pt} unilateral but not strongly connected}\\
	  \parbox{70pt}{\vspace{5pt} $~~~~~C_3$}
	& \parbox{200pt}{\vspace{5.5pt} strongly connected}\\
\end{tabular}
\end{definition}
As the above categories might suggest, the connectivity notions for digraphs are complicated. Also, the \textit{$ij$ connectivity} $\kappa_{ij}$ of a digraph $\mathcal{D}$ is the minimum number of nodes whose removal changes $\mathcal{D}$ from a digraph in $C_i$ to one in $C_j$. 
A special case is when $j=0$, in which case, we set $\kappa_{ij}=N-1$. Thus, for $\mathcal{D}$ in $C_i$, $i>0$, it is trivial that $\kappa_{i0} = \kappa_{\mathcal{G}_\mathcal{D}}$. In particular, we define the \textit{strong connectivity} of a strongly connected digraph $\mathcal{D}$ as $\kappa_3(\mathcal{D})=\min \{\kappa_{30},\kappa_{31}, \kappa_{32} \}$. Thus $\kappa_3 (\mathcal{D})$ is the  minimum number of nodes whose removal renders $\mathcal{D}$ non-strong or trivial. Similar definitions of $\kappa_0$, $\kappa_1$, $\kappa_2$ can be introduced.

	
	In Section \ref{sec. grrob}, we provide detailed discussions on the relation among robustness and connectivity of a graph. 
	
	\subsection{Distributed Average Consensus}
	In a given network $\mathcal{D}$ of $N$ agents, at time instant $k$, each node $i \in \mathcal{V}$ has a scalar state $x_i[k]$. The \textit{average consensus} problem considers designing distributed algorithms by which the nodes update their states using only the local information of their neighbour nodes such that all $x_i[k]$ eventually converge to the initial average $x_a = \sum_{i=1}^N x_i[0]/N$. The following definition formulates the objective of the distributed consensus averaging scheme.
	
	\begin{definition} \rm \label{df.ave}
		(Average consensus) A network of $N$ agents is said to achieve average consensus if for every initial condition $x_i[0], i=1, \ldots, N$, it holds that $\lim_{k \to \infty} x_i[k] = x_a, \forall i \in \{ 1, 2, \ldots , N \}$.
	\end{definition} 
\subsection{Adversarial Model}
	We consider a problem where the network has to deal with Byzantine adversaries. A Byzantine adversarial node can possess complete knowledge about graph topology and all the communications between the nodes at every time step. It also can deviate from the rules of any prescribed algorithm in arbitrary ways, and can transmit different state values to different neighbours at the same time step.
	
	It is apparent that no distributed consensus algorithm would succeed if too many nodes are adversarial. We partition the set of nodes $\mathcal{V}$ into two subsets: The set $\mathcal{R}$ of regular nodes and the set $\mathcal{A} = \mathcal{V} \setminus \mathcal{R}$ of adversarial nodes. In the literature dealing with distributed fault-tolerant algorithms, it is a common assumption to assign an upper bound $f$ to the total number of adversarial nodes in the network. This is known as the $f$-total adversarial model. However, to allow for a large number of adversaries in large scale networks, we consider a locally bounded fault model, taken from \cite{koo2004broadcast},\cite{pelc2005locallybounded}, defined as follows.
	
	\begin{definition} \rm \label{df.flmod}
		($f$-local adversarial model) A set $\mathcal{A}$ of adversarial nodes is $f$-locally bounded if it contains at most $f$ adversarial nodes in the set of neighbors of each regular node, i.e. $\vert \mathcal{N}_i^- \cap \mathcal{A} \vert \leq f, \forall i \in \mathcal{R}$.
	\end{definition}
	\begin{definition} \rm \label{df.flmod}
		($f$-total adversarial model) A set $\mathcal{A}$ of adversarial nodes is $f$-totally bounded if it contains at most $f$ adversarial nodes in the underlying graph of a network, i.e. $\vert \mathcal{V} \cap \mathcal{A} \vert \leq f$.
	\end{definition}

	\vspace{-.4cm}
	\subsection{Problem Statement}
	In this paper, the aim is to propose a distributed strategy to solve the average consensus problem in the presence of Byzantine adversaries formally put forth as follows.
	
	\begin{definition} \rm \label{df.resavrg}
		(Resilient average consensus) A multi-agent system over the graph $\mathcal{D}$ under Byzantine adversarial attacks is said to achieve resilient average consensus if for every initial value $x_i[0] \in \mathcal{I}, i=1, \ldots, N$, it holds that $\lim_{k \to \infty} x_i[k] = x_a, \forall i \in \mathcal{R}$, where $x_a = \sum_{i \in \mathcal{R}} x_i[0] /\vert \mathcal{R} \vert$ and $\mathcal{I} \subset \mathbb{R}$ is an interval of safe initial values.
	\end{definition}
	
	We note here that the average value to be calculated by regular nodes is that of all agents excluding the adversarial nodes. In fact, we require the regular nodes to arrive at $x_a$. Accordingly, any nodes broadcast initial values within $\mathcal{I}$ are assumed to be regular, unless, some decide to change their values in the course averaging process. In other words, adversarial nodes may decide to broadcast initial values out of $\mathcal{I}$, or to change their values after the retrieval started.
		
	It is also noteworthy that the application of our fully distributed algorithms is not limited to resilient averaging and can be used for the general problem of \textit{function calculation}, where regular nodes compute a given function of the initial values of the network in a distributed manner \cite{sundaram2011distributed}. 
	
	To present a solution for the resilient average consensus problem, we first introduce a novel distributed algorithm for securely accepting and broadcasting information through a network in the presence of $f$-local Byzantine adversaries. Then, we analyze the required constraints on the graph topology, which guarantee achieving resilient average consensus by all of the regular nodes of the network.
\vspace{-.2cm}
\section{Resilient Distributed Averaging}\label{Sect: MainSection}
In this section, we present our strategy that each regular node has to follow in accepting and broadcasting the values it receives from its neighbours and the update rule for its own state value to achieve average consensus. We describe our solution in two steps: the resilient distributed retrieval of initial state values and the averaging rule each node has to execute. The following definition is a formal statement of the resilient distributed retrieval in a given network.

\begin{definition} \rm \label{df.ret}
	(Resilient distributed retrieval) A network of $N$ nodes under Byzantine attacks is said to achieve resilient distributed retrieval if each node $i \in \mathcal{R}$ can retrieve the initial values of all the other regular nodes, i.e. $x_j[0]$, $j \in \mathcal{R} \setminus \{ i\}$. 
\end{definition}
\vspace{-.4cm}
\subsection{Resilient Distributed Retrieval}
To reach the average value of the network, each regular node must obtain the initial state values of other regular nodes. Inspired by the CPA, we propose the Secure Accepting and Broadcasting Algorithm (SABA) for node $i \in \mathcal{R}$ in the presence of $f$-locally bounded Byzantine adversarial nodes. With this algorithm, regular nodes can securely identify and accept the true initial state values of other regular nodes and broadcast them through the network. 

To this end, each regular node $i$ uses a persistent memory vector $m^i[k]=[m_1^i[k], \ldots , m_{\bar{N}}^i[k]]$, $\bar{N} \geq N$, to record the values received and accepted from the incoming neighbors in $\mathcal{N}_i^-$ at each time instant $k \geq 0$. The element $m_n^i[k]$, $n \in \{1, \ldots , N\}$ of the vector is associated to node $n$'s state value in node $i$'s memory. At $k=0$, the memory vector is created as $m^i[0]=[ \ ]_{1 \times \bar{N}}$, where $[ \ ]$ is an empty vector. Here, we assume that regular nodes know an upper bound for the number of network's nodes $\bar{N} \geq N$. Note that this assumption is not restrictive and means that the regular nodes need not know the exact value $N$ for their updates\footnote{Using variable-sized memories, this assumption may not be required. In fact, we define the upper bound $\bar{N}$ to avoid using variable-sized memories and used fixed-sized static memories instead, which are preferred in implementations.}.  

Each regular node $i$ begins to execute the algorithm at $k=0$ and sets $m_i^i[0] = x_i[0]$. At $k=1$, it broadcasts $m^i[0]$ to all of its outgoing neighbors, receives the initial state values of its incoming neighbors, and updates its memory, i.e. $m_j^i[1] = m_j^j[0]$, $j \in \mathcal{N}_i^-$. In this subsection, we assume that there are no communication delays in the network; each regular node $i$ at time instant $k$ simultaneously sends and receives the data packets related to time instant $k-1$. Also, as we discussed earlier, regular nodes are supposed to send initial values within $\mathcal{I}$. Adversarial nodes may send initial values out of the safe interval $\mathcal{I}$, change their values in the course of retrieval, or send different values to each of their neighbors. In the first two cases, the adversarial nodes are easily detectable. The latter case will be prevented by a majority voting: Each regular node $i$ at time instant $k>1$ only accepts the values that are sent by more than $f+1$ incoming neighbors and saves such values with the corresponding label tag $n$ in $m_n^i[k]$. As the algorithm is running, each regular node $i$ fills its memory $m^i[k]$ with more initial values tagged with new labels. The pseudo-code of SABA can be seen in Algorithm~\ref{alg1}.

Next, we investigate the required network topology under which all regular nodes can retrieve the initial state values of other regular nodes by running the SABA. Note that the regular nodes do not know when to stop executing the algorithm. Thus, they are programmed to run the SABA for a lower bounded number of steps denoted by $\bar{K}$.

\begin{algorithm}[t]
\footnotesize
	\SetAlgoLined
	\textbf{Initialization}\\
	The regular node $i$ creates a persistent empty memory $m^i[0]=[ \ ]_{1 \times \bar{N}}$, where $\bar{N} \geq N$, and sets $m_i^i[0] = x_i[0]$.\\
	\If{$k=1$}{
		The regular node $i$ broadcasts its memory vector $m^i[0]$ to all its neighbors.\\
		\For{$j \in \mathcal{N}_i^-$}{
			The regular node $i$ receives $m^j[0]$ and updates its memory $m_j^i[1]=m_j^j[0]$, $j \in \mathcal{N}_i^-$.
		}
	}
	\If{$k>1$}{
		The regular node $i$ broadcasts its memory vector $m^i[k-1]$ to all its neighbors.\\
		\For{$n \in \{1,2,\ldots , \bar{N} \}$}{
			If the regular node $i$ received an identical value $m_n^j[k-1]$, $j \in \mathcal{N}_i^-$ from $f + 1$ incoming neighbors, then it accepts that value and saves it in the memory $m_n^i[k]$.
		}
	}
	\hrulefill \\
	\KwResult{$ m^i[k]=[m_1^i[k], \ldots , m_{\bar{N}}^i [k]]$, $\bar{N} \geq N$	
	}
	\caption{Secure Accepting and Broadcasting Algorithm (SABA)} 
	\label{alg1}
\end{algorithm}

\begin{theorem} \rm \label{th.alg1}
	Each node $i \in \mathcal{R}$ in the network $\mathcal{D}$, by executing the SABA for $\bar{K} \geq 2N-1$ steps\footnote{$N$ in this lower bound can be replaced with $\bar{N}$, which is an estimate of the number of nodes, to make the stopping time also independant from the global information of the network.}, will retrieve $x_\ell[0]$, $\ell \in \mathcal{R} \setminus \{i \}$ under the $f$-local adversarial model if $\mathcal{D}$ is strongly $(2f + 1)$-robust.
\end{theorem}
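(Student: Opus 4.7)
My plan is to decompose the statement into a safety claim — every value accepted by a regular node is the true initial value of a regular source — and a liveness claim — every regular initial value is retrieved by every regular node within $\bar{K}$ steps — and then to prove both by induction on $k$, using the $f$-local bound for safety and strong $(2f+1)$-robustness for liveness.

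For safety I will induct on $k$ and show that whenever $m_n^i[k]$ is defined for regular $i,n$, it equals $x_n[0]$. The $k=0$ case is immediate from the initialization, and $k=1$ follows because the direct-copy rule yields $m_n^i[1] = m_n^n[0] = x_n[0]$ for every $n \in \mathcal{N}_i^- \cap \mathcal{R}$. For $k>1$ the acceptance of a value $v$ tagged $n$ requires $f+1$ identical messages; since $|\mathcal{N}_i^- \cap \mathcal{A}| \leq f$ by the $f$-local model, at least one regular sender $j^*$ had $m_n^{j^*}[k-1] = v$, and the induction hypothesis identifies $v$ with $x_n[0]$.

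For liveness I fix a regular source $n$ and track $\mathcal{V}_n^k := \{ j \in \mathcal{R} : m_n^j[k] = x_n[0]\}$. The $k=1$ direct-copy step seeds $\mathcal{V}_n^1 \supseteq \{n\} \cup (\mathcal{N}_n^+ \cap \mathcal{R})$, and persistence of the memory makes $\mathcal{V}_n^k$ non-decreasing. I aim to show that as long as $\mathcal{V}_n^k \subsetneq \mathcal{R}$, the set gains at least one element at the next step. Applying Definition \ref{df.Srob} to $\mathcal{C} := \mathcal{R} \setminus \mathcal{V}_n^k$, the reachability alternative produces $i \in \mathcal{C}$ with $|\mathcal{N}_i^- \setminus \mathcal{C}| \geq 2f+1$; since $\mathcal{V} \setminus \mathcal{C} = \mathcal{A} \cup \mathcal{V}_n^k$ and at most $f$ of $i$'s in-neighbors are adversarial, at least $f+1$ lie in $\mathcal{V}_n^k$, all broadcasting $x_n[0]$ tagged $n$ at step $k+1$, so $i$ accepts it and joins $\mathcal{V}_n^{k+1}$. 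Iterating from $|\mathcal{V}_n^1| \geq 1$ gives $\mathcal{V}_n^k = \mathcal{R}$ by time $k \leq |\mathcal{R}| \leq N$, comfortably within $\bar{K} \geq 2N-1$, and since $n$ was arbitrary the retrieval is complete.

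The main obstacle is the containment alternative of Definition \ref{df.Srob}: some $i \in \mathcal{C}$ with $\mathcal{V} \setminus \mathcal{C} \subseteq \mathcal{N}_i^-$ only offers $|\mathcal{V}_n^k|$ regular in-neighbors sending $x_n[0]$, which can be fewer than $f+1$ when $|\mathcal{V}_n^k|$ is small, leaving the majority rule unmet. My plan is to exclude this alternative outright using the $k=1$ seeding: because $n \in \mathcal{V}_n^k \subseteq \mathcal{V} \setminus \mathcal{C}$, containment forces $n \in \mathcal{N}_i^-$ and hence $i \in \mathcal{N}_n^+ \cap \mathcal{R} \subseteq \mathcal{V}_n^1 \subseteq \mathcal{V}_n^k$, contradicting $i \in \mathcal{C}$. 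With containment eliminated, every inductive step collapses to the clean reachability argument above.
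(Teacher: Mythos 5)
Your proof is correct, and its engine is the same as the paper's: apply strong $(2f+1)$-robustness to the set of regular nodes still missing $x_n[0]$, extract a node with $2f+1$ in-neighbors outside that set, discount at most $f$ adversaries via the $f$-local bound, and let the remaining $f+1$ regular informed neighbors trigger the majority rule. Where you genuinely diverge is in packaging and completeness. The paper runs a single contradiction on the terminal set $\mathcal{U}$ of never-informed nodes and simply asserts that ``some node $i \in \mathcal{U}$ must have $2f+1$ neighbors outside $\mathcal{U}$,'' silently dropping the second alternative in Definition~\ref{df.Srob} ($\exists i \in \mathcal{S}$ with $\mathcal{V} \setminus \mathcal{S} \subseteq \mathcal{N}_i^-$); you handle that branch explicitly and dispose of it cleanly by observing that containment forces $i \in \mathcal{N}_n^+ \cap \mathcal{R} \subseteq \mathcal{V}_n^1$, contradicting $i \in \mathcal{C}$. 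You also prove a safety lemma — that nothing false is ever accepted under a regular tag — which the paper leaves entirely implicit, yet which is needed to rule out a spurious value also reaching the $f+1$ threshold. Finally, your monotone-growth formulation of liveness yields an explicit retrieval time of at most $N$ steps, whereas the paper obtains its $\bar{K} \geq 2N-1$ bound from a separate worst-case discussion involving adversaries changing their values mid-run (which is really about detection, not about retrieving regular nodes' initial values, and so is not needed for the statement as written). In short: same key combinatorial step, but your decomposition into safety and liveness closes two real gaps in the published argument and delivers a sharper time bound.
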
 

\begin{proof}
	Suppose that there is a finite time $\bar{K}$ such that by that time, all the regular nodes would have received all the initial state values  that are not faulty. Consider node $\ell \in \mathcal{R}$. Each regular node $i \in \mathcal{N}_\ell$ receives the initial state value $x_\ell [0]$ directly. We use contradiction to prove that all other nodes will receive $x_\ell [0]$. Let $\mathcal{U}$ denote the set of all the regular nodes which cannot receive $x_\ell [0]$. According to Def.~\ref{df.Srob}, we know that some node $i \in \mathcal{U}$ must have $2f + 1$ neighbors outside $\mathcal{U}$. At most $f$ of these nodes can be adversarial and all other nodes are regular nodes that have received $x_\ell [0]$ and re-broadcasted it at some time step $k \leq \bar{K}$. This contradicts the assumption that node $i$ fails to get the initial state value of node $\ell$. The same argument holds for other regular nodes. Thus, all the regular nodes securely access $x_i[0]$, $i \in \mathcal{R}$.

	To find a lower bound for $\bar{K}$, consider node $i,j \in \mathcal{V}$ and assume that $x_j[0] \in \mathcal{I}$. According to the SABA, $x_j[0]$ must retrieved by $i$. In the worst case, $x_j[0]$ has to be passed by all the nodes in $\mathcal{R} \setminus \{ i,j \}$ to the node $i$, which takes $N-1$ time steps. However, consider that node $j$ is an adversarial node and decides to change its value at $k=N-1$. Then, it takes another $N-1$ time steps that $x_j[N-1]$ retrieved by node $i$ in the worst case. Thus, to ensure that all initial values retrieved by all the nodes, SABA must be executed for $2N-1$ time steps including $k=0$, i.e. $\bar{K} = 2N-1$. Note that each node $i$ does not know exactly when to stop executing the SABA, and thus, keeps executing it until $\bar{K}$ to ensure that all the initial state values are retrieved securely and correctly.
\end{proof}

%
\vspace{-.1cm}
Interestingly, by executing the SABA up to $\bar{K}$ steps, all the regular nodes can detect adversarial nodes even if they broadcast values in the safe interval $\mathcal{I}$ to deceive the regular nodes. The following scenarios are possible:

\begin{itemize}
	\item[i)] An adversarial node $s$ may broadcast a false constant state value $m_n^s[k]=a \in \mathcal{I}$ for the label $n \in \{1,2,\ldots,\bar{N} \}$, from a time instant $t=k>0$ to node $i$. According to Theorem~\ref{th.alg1}, the regular node $i$ will at some point $t$ (before $\bar{K}$) receive the value $m_n^j[k] \neq a$, $j \in \mathcal{N}_i^-$ from at least $f+1$ regular neighbors. Then, it will easily detect node $s$ as an adversarial node.
	
	\item[ii)] An adversarial node $s$, at time instant $t=k$, may broadcast $m_n^s[k]$ for the label $n \in \{1,2,\ldots,\bar{N} \}$ to node $i$ and change it at some time $t>k$. According to the SABA, all the regular nodes are supposed to constantly send the accepted values up to $\bar{K}$ steps. Thus, node $i$ simply identifies $s$ as an adversarial node. However, note that node $i$ needs extra memory buffers for each of its neighbors to track the changes in the values they broadcast while the algorithm is executed.
\end{itemize}

While the SABA is inspired by the CPA \cite{zhang2012robustness}, our retrieval algorithm considers information diffusion for all nodes of the graph simultaneously. Accordingly, the graph condition needed for the SABA to succeed is inclusive with respect to the graph condition associated with the CPA -- each graph which is strongly $r$-robust with respect to each of its nodes is in fact strongly $r$-robust (see Section~\ref{sec.topology} for a brighter insight). Also, the SABA can be easily developed for the asynchronous settings with communication delays (see Algorithm~\ref{alg2}). Furthermore, the most recent work in the literature,  \cite{tseng2015broadcast}, proposed a condition for convergence of the CPA in the presence of $f$-locally bounded adversaries. However, this condition imposes a higher order of time complexity with respect to strong robustness if we want to use it for our retrieval strategy (see Subsection~\ref{sec.cox} for more details). Therefore, we use the strong robustness notion as the sufficient condition for the convergence of our algorithms, while the condition in \cite{tseng2015broadcast} can be our necessary condition.\\
\vspace{-.5cm}
\subsection{Synchronous Averaging}
In this subsection, we suppose that the network makes updates synchronously and propose an update rule for the regular nodes to update their states using the received and accepted initial state values at each time step, converging to the average consensus asymptotically.

The regular node $i$ by running the SABA updates its memory $m^i[k]$, at each time step $k$, and obtains more initial state values of other regular nodes which are received over time from its neighbors. We define the instantly cumulative average of the initial state values received by node $i$ up to the time instant $k$ as follows:

\begin{equation} \label{eq.qaver}
\phi_i[k]=\frac{\sum m_n^i[k]}{\lambda[k]}, \ n \in \mathbb{M}^i[k],
\end{equation}
where $\mathbb{M}^i[k]$ is the set of indices of the elements in the memory vector $m^i[k]$ which are nonempty and its cardinality is given by $\lambda[k] = \vert \mathbb{M}^i[k] \vert$. 
Then, to handle any possible fluctuation in the state updates, the regular node $i$ utilizes a low pass filter (also known as exponential smoothing \cite{brown2004smoothing}) and updates its state as:

\begin{equation}  \label{eq.upd}
\begin{aligned} 
x_i[0]&= \phi_i[0],\\
x_i[k]&=\epsilon_i x_i[k-1] + (1-\epsilon_i) \phi_i[k], \ \forall k>0,
\end{aligned}
\end{equation}
where $0 \leq \epsilon_i<1$ is the filter gain. In the case $\epsilon_i = 0$, the state value of $x_i[k]$ is equal to the instantly cumulative average $\phi_i[k]$. The filter gain $\epsilon_i$ can be chosen arbitrarily, but its size may determine the susceptibility of the node's dynamics to the instant changes in the state values. However, the filter will impose delays in convergence of $x_i$. 


Here, we connect our results together to show how our proposed strategy leads a network with specific topology constraint to resilient average consensus. 

\begin{theorem} \rm \label{th.sum}
	Each node $i \in \mathcal{R}$ achieves average consensus by executing the SABA for $\bar{K} \geq N-2$ steps and performing the update rule (\ref{eq.upd}) under the $f$-local adversarial model if the network is strongly $(2f + 1)$- robust.
\end{theorem}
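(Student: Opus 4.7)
The plan is to decompose the proof into two logically separate phases: first, show that in finitely many rounds of \textbf{SABA} every regular node $i$ has its memory filled with exactly the initial values $\{x_j[0] : j \in \mathcal{R}\}$ (and nothing else that contaminates the running average); second, show that once this happens, the low-pass filter in (\ref{eq.upd}) drives $x_i[k]$ to $x_a$ asymptotically.

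For the first phase, I would invoke Theorem~\ref{th.alg1} directly: strong $(2f+1)$-robustness together with the $f$-local adversarial model guarantees that after at most $\bar K=2N-1$ SABA rounds, each regular $i$ has retrieved $x_j[0]$ for every $j\in \mathcal{R}\setminus\{i\}$, and its own entry $m_i^i[k]=x_i[0]$ was fixed at initialization. I would then argue that no spurious label can persist in $m^i[k]$: any value an adversary might inject under label $n$ is accepted only if $f+1$ incoming neighbors all broadcast the same value under that label, and by the case analysis following Theorem~\ref{th.alg1} (the two scenarios~(i)--(ii)) any such attempt either fails the majority test or is overwritten/flagged as soon as the regular neighbors' consistent retrieval arrives. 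Consequently there exists a finite $K^\star \le \bar K$ such that for all $k\ge K^\star$, $\mathbb{M}^i[k]$ coincides exactly with $\mathcal{R}$ and $\lambda[k]=|\mathcal{R}|$, so that
\begin{equation*}
\phi_i[k] \;=\; \frac{1}{|\mathcal{R}|}\sum_{j\in\mathcal{R}} x_j[0] \;=\; x_a, \qquad k\ge K^\star.
\end{equation*}

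For the second phase, I would plug this terminal value of $\phi_i$ into the filter recursion (\ref{eq.upd}). Subtracting $x_a$ from both sides and using $\phi_i[k]=x_a$ for $k\ge K^\star$ yields
\begin{equation*}
x_i[k]-x_a \;=\; \epsilon_i\bigl(x_i[k-1]-x_a\bigr), \qquad k> K^\star,
\end{equation*}
which iterates to $x_i[k]-x_a=\epsilon_i^{\,k-K^\star}(x_i[K^\star]-x_a)$. Since $0\le \epsilon_i<1$, the right-hand side vanishes as $k\to\infty$, giving $\lim_{k\to\infty} x_i[k]=x_a$ for every $i\in\mathcal{R}$, which is the resilient average consensus of Definition~\ref{df.resavrg}.

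The main obstacle I anticipate is the first phase, specifically making rigorous the claim that after $K^\star$ the memory contains \emph{exactly} $\mathcal{R}$: one must rule out stable adversarial entries under labels $n\notin\mathcal{R}$ (which requires the $f$-local bound to prevent $f+1$ colluding regular-appearing testimonies) and must ensure that adversarial nodes' labels $n\in\mathcal{A}$ either cannot accumulate $f+1$ consistent votes or are detectable and excluded by scenarios~(i)--(ii). A secondary, smaller issue is reconciling the bound stated in the theorem ($\bar K\ge N-2$) with the retrieval bound $\bar K\ge 2N-1$ from Theorem~\ref{th.alg1}; since asymptotic convergence only requires $\phi_i[k]$ to stabilize at $x_a$ for all sufficiently large $k$, any $\bar K$ at least as large as the retrieval time suffices, and I would simply invoke $\bar K\ge 2N-1$ in the proof.
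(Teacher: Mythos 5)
Your proposal is correct and follows essentially the same route as the paper: invoke Theorem~\ref{th.alg1} to obtain finite-time retrieval so that $\phi_i[k]$ equals $x_a$ for all sufficiently large $k$, then observe that the recursion (\ref{eq.upd}) with $0 \le \epsilon_i < 1$ is Schur stable and drives $x_i[k]$ to $x_a$. Your explicit iteration of the error recursion after the finite stabilization time $K^\star$ is in fact a cleaner version of the paper's limit argument, and the $\bar{K} \ge N-2$ versus $\bar{K} \ge 2N-1$ discrepancy you flag is a genuine inconsistency in the theorem statement that the paper's own proof silently passes over.
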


\begin{proof}
	Referring to Theorem~\ref{th.alg1}, each node $i \in \mathcal{R}$ in the network $\mathcal{D}$, which is strongly ($2f+1$)-robust, will securely retrieve $x_i[0]$, $i \in \mathcal{R}$, if it executes the SABA for $\bar{K}$ steps. Then, according to (\ref{eq.qaver}), $\phi_i[k]$ is a linear combination of the received and accepted initial state values $x_i[0], \ i \in \mathcal{R}$, at each time instant $k$ and converges to $x_a$ asymptotically as each node $i$ will ultimately receive the initial state values of other regular nodes at $t=\bar{K}$. Now, we add and subtract $2x_a$ on both sides of (\ref{eq.upd}) and rewrite it as $(x_i[k]-x_a) - (\phi_i[k]-x_a)=\epsilon_i (x_i[k-1]-x_a) - \epsilon_i (\phi_i [k]- x_a)$, which leads to the following relation if $k \to \infty$ (since $\lim_{k \to \infty} \phi_i[k] = x_a$): $x_i[k]-x_a=\epsilon_i (x_i[k-1]-x_a)$. Since $0 \leq \epsilon_i<1$, this system is Schur stable and $\lim_{k \to \infty}  x_i[k]=x_a$, $\forall i \in \mathcal{R}$.
\end{proof}

In view of the literature of resilient averaging, we achieved a significant contribution in terms of distributed decision making in the strategy, the memory usage, and the computational capability of each regular node. In \cite{sundaram2011distributed}, the required topology constraint for achieving resilient averaging proposed to be ($2f+1$)-connectivity. However, the method is not distributed since each regular node needs to know the full network topology (in particular, the observability and invertibility matrices, both of which are functions of the adjacency matrix of the network) to update its state value and to achieve average consensus. By contrast, our design is fully distributed although it imposes a more complicated topology constraint on the network, that is, the network must be strongly ($2f+1$)-robust. Furthermore, the proposed strategy in \cite{sundaram2011distributed} forced the regular nodes to utilize a large amount of memory for storing two large matrices at each step and high computational capabilities for calculating the rank of matrices or multiplying matrices. On the other hand, our presented strategy requires each regular node to utilize only $\bar{N} \geq N$ memory for achieving average consensus in a distributed manner and dedicate $\bar{N}$ extra memory buffers for each of its neighbors to detect and eliminate all the adversarial nodes.

Theorem~\ref{th.sum} gives a sufficient, and not necessary, condition for secure convergence under $f$-local adversarial model. We propose a necessary condition for achieving resilient average consensus using SABA and update rule (\ref{eq.upd}) which is valid for both $f$-local and $f$-total adversarial models. Relying on Theorem~\ref{th.sum}, in the rest of the paper we seek resilient retrieval which can conclude the resilient averaging.

\begin{theorem} \rm \label{th.necrob}
If a network achieves resilient average consensus by performing SABA and update rule (\ref{eq.upd}) under $f$-total \slash$f$-local adversarial model, its underlying graph is $(2f+1)$-robust.
\end{theorem}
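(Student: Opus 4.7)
The plan is to prove the contrapositive: assuming $\mathcal{G}_\mathcal{D}$ is not $(2f+1)$-robust, I will exhibit adversary configurations and initial values under which SABA together with the update rule (\ref{eq.upd}) provably fails to achieve resilient average consensus. Unpacking the negation of Def.~\ref{df.rob}, there exist nonempty disjoint $\mathcal{S}_1,\mathcal{S}_2\subseteq\mathcal{V}$ such that neither is $(2f+1)$-reachable in $\mathcal{G}_\mathcal{D}$. Since the directed in-neighbors of any node in $\mathcal{D}$ are contained in its undirected neighbor set in $\mathcal{G}_\mathcal{D}$, this yields $|\mathcal{N}_i^-\setminus\mathcal{S}_1|\leq 2f$ for every $i\in\mathcal{S}_1$, and symmetrically for $\mathcal{S}_2$.

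The central step is a two-world indistinguishability argument. I construct worlds $E_1$ and $E_2$ in each of which every regular node of $\mathcal{S}_1$ holds initial value $0$. In $E_1$ the regular nodes outside $\mathcal{S}_1$ hold value $a$, and an adversary set $\mathcal{A}_1\subseteq\mathcal{V}\setminus\mathcal{S}_1$ masquerades as a regular node of value $b\neq a$. In $E_2$ the regular nodes outside $\mathcal{S}_1$ hold value $b$, while a disjoint adversary set $\mathcal{A}_2\subseteq\mathcal{V}\setminus\mathcal{S}_1$ masquerades as value $a$. Using $|\mathcal{N}_i^-\setminus\mathcal{S}_1|\leq 2f$, for each $i\in\mathcal{S}_1$ I partition $\mathcal{N}_i^-\setminus\mathcal{S}_1$ into two pieces of size at most $f$ and assign them consistently to $\mathcal{A}_1$ and $\mathcal{A}_2$. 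This placement respects the $f$-local bound and, by design, each in-neighbor of each $i\in\mathcal{S}_1$ transmits the same labelled value in both worlds (an adversary declaring $b$ in $E_1$ emits exactly what the co-located regular of $E_2$ truthfully sends, and symmetrically). A straightforward induction on $k$ then gives that $m^i[k]$ and $x_i[k]$ coincide across the two worlds for every regular $i\in\mathcal{S}_1$. Because $a\neq b$ and $\mathcal{R}_j\setminus\mathcal{S}_1$ is nonempty, the regular averages $x_a^{(1)}$ and $x_a^{(2)}$ are distinct, so $\lim_k x_i[k]$ cannot equal both limits and resilient average consensus must fail in at least one world.

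The main obstacle I foresee is choosing $\mathcal{A}_1$ and $\mathcal{A}_2$ globally so that the locality condition is respected at every regular node, not only those in $\mathcal{S}_1$, and adapting the argument to the $f$-total model, where the budget $|\mathcal{A}_j|\leq f$ may preclude simultaneous coverage of the boundaries of every $i\in\mathcal{S}_1$. In the $f$-total setting I would fall back to the special case of a single target $i^{*}\in\mathcal{S}_1$ whose entire in-boundary $\mathcal{N}_{i^{*}}^{-}\setminus\mathcal{S}_1$ is covered by the same $f$ adversaries; a careful selection of $\mathcal{S}_1$ (for instance a minimal set that is not $(2f+1)$-reachable) or a direct propagation-style induction over $\mathcal{S}_1\cap\mathcal{R}$ would then establish that some regular initial value $x_j[0]$ never enters $i^{*}$'s memory, blocking $\phi_{i^{*}}[k]\to x_a$ in (\ref{eq.qaver}) and hence $x_{i^{*}}[k]\to x_a$ in (\ref{eq.upd}).
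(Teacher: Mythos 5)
Your core combinatorial idea is the same one the paper uses: if $\mathcal{S}_1$ is not $(2f+1)$-reachable, each $i\in\mathcal{S}_1$ has at most $2f$ in-neighbors outside $\mathcal{S}_1$, and splitting that boundary into $f$ adversarial reporters and at most $f$ regular reporters defeats the $(f+1)$-majority acceptance rule of SABA. The packaging differs, though: the paper argues by direct contradiction about a \emph{single} node $i\in\mathcal{S}_1$ trying to retrieve the value of a \emph{single} node $j\in\mathcal{S}_2$, with all $f$ Byzantine nodes placed in $\mathcal{V}\setminus(\mathcal{S}_1\cup\mathcal{S}_2)$ adjacent to $i$; you instead attempt a global two-world indistinguishability argument over all of $\mathcal{S}_1$ simultaneously.

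That global step is where your proof has a genuine gap, and it is exactly the obstacle you flag without resolving. Your construction requires partitioning the entire out-boundary $\bigcup_{i\in\mathcal{S}_1}(\mathcal{N}_i^-\setminus\mathcal{S}_1)$ into disjoint sets $\mathcal{A}_1,\mathcal{A}_2$ with $\vert\mathcal{N}_i^-\cap\mathcal{A}_1\vert\le f$ and $\vert\mathcal{N}_i^-\cap\mathcal{A}_2\vert\le f$ for every $i\in\mathcal{S}_1$ (and every boundary node must land in one of the two sets, or the views diverge). Such a partition need not exist: take $f=1$, $\mathcal{S}_1=\{i_1,i_2,i_3\}$ with outside in-neighborhoods $\{v_1,v_2\}$, $\{v_2,v_3\}$, $\{v_1,v_3\}$ respectively; each pair must receive both colors, which is a proper $2$-coloring of a triangle and is impossible. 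So the two-world construction cannot always be instantiated, and in that very example the adversary in fact cannot blind all of $\mathcal{S}_1$ (node $i_2$ can obtain $f+1$ consistent reports and relay them), so indistinguishability over all of $\mathcal{S}_1$ is the wrong target. The repair is essentially your own fallback: localize to one node $i^{*}$ whose $\le 2f$ outside in-neighbors contain the $f$ Byzantine nodes, as the paper does for both the $f$-total and $f$-local models --- though to make even that airtight one must also rule out $i^{*}$ acquiring the missing value through other nodes of $\mathcal{S}_1$, a propagation argument over $\mathcal{S}_1\cap\mathcal{R}$ that you mention but do not carry out (and which the paper's own proof also leaves implicit).
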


\begin{proof}
Consider a network $\mathcal{D}$ that reaches average consensus under $f$-total adversarial model by executing SABA and its underlying graph is not $(2f+1)$-robust. Thus, there exist two distinct subsets $\mathcal{S}_1, \mathcal{S}_2 \subset \mathcal{V}$ which both are at most $2f$-reachable. Suppose that node $i$ and $j$ are in $\mathcal{S}_1$ and $\mathcal{S}_2$, respectively, and all $f$ Byzantine nodes are in $\mathcal{V} \setminus (\mathcal{S}_1 \cup \mathcal{S}_2)$. Node $i$, to calculate the average consensus using update rule (\ref{eq.upd}), needs the initial state value of node $j$. However, it has at most $2f$ incoming edges from out of $\mathcal{S}_1$. If $f$ of these incoming edges are from $f$ Byzantine nodes and the other $f$ nodes are from the regular nodes out of $\mathcal{S}_1$, then the node $i$ cannot verify the initial state value of node $j$ received from its neighbors out of $\mathcal{S}_1$ by majority voting e.g. the $f$ Byzantine nodes send the value $a$ as the initial state of the node $j$ and the other $f$ regular nodes out of $\mathcal{S}_1$ send the value $b$. Therefore, node $i$ cannot achieve the average consensus which is a contradiction. 
\end{proof}

Furthermore, the following proposition shows that SABA fails in certain networks with a constraint in the connectivity sense which does not meet the same robustness constraint proposed in Theorem~\ref{th.sum}. 

\begin{figure}[t]
\fontsize{7}{10}\selectfont
	\vspace{0cm}
	\def \svgscale{.35}
	\hspace{2.5cm}
\begingroup%
  \makeatletter%
  \providecommand\color[2][]{%
    \errmessage{(Inkscape) Color is used for the text in Inkscape, but the package 'color.sty' is not loaded}%
    \renewcommand\color[2][]{}%
  }%
  \providecommand\transparent[1]{%
    \errmessage{(Inkscape) Transparency is used (non-zero) for the text in Inkscape, but the package 'transparent.sty' is not loaded}%
    \renewcommand\transparent[1]{}%
  }%
  \providecommand\rotatebox[2]{#2}%
  \newcommand*\fsize{\dimexpr\f@size pt\relax}%
  \newcommand*\lineheight[1]{\fontsize{\fsize}{#1\fsize}\selectfont}%
  \ifx\svgwidth\undefined%
    \setlength{\unitlength}{278.24151948bp}%
    \ifx\svgscale\undefined%
      \relax%
    \else%
      \setlength{\unitlength}{\unitlength * \real{\svgscale}}%
    \fi%
  \else%
    \setlength{\unitlength}{\svgwidth}%
  \fi%
  \global\let\svgwidth\undefined%
  \global\let\svgscale\undefined%
  \makeatother%
  \begin{picture}(1,0.81024562)%
    \lineheight{1}%
    \setlength\tabcolsep{0pt}%
    \put(0,0){\includegraphics[width=\unitlength,page=1]{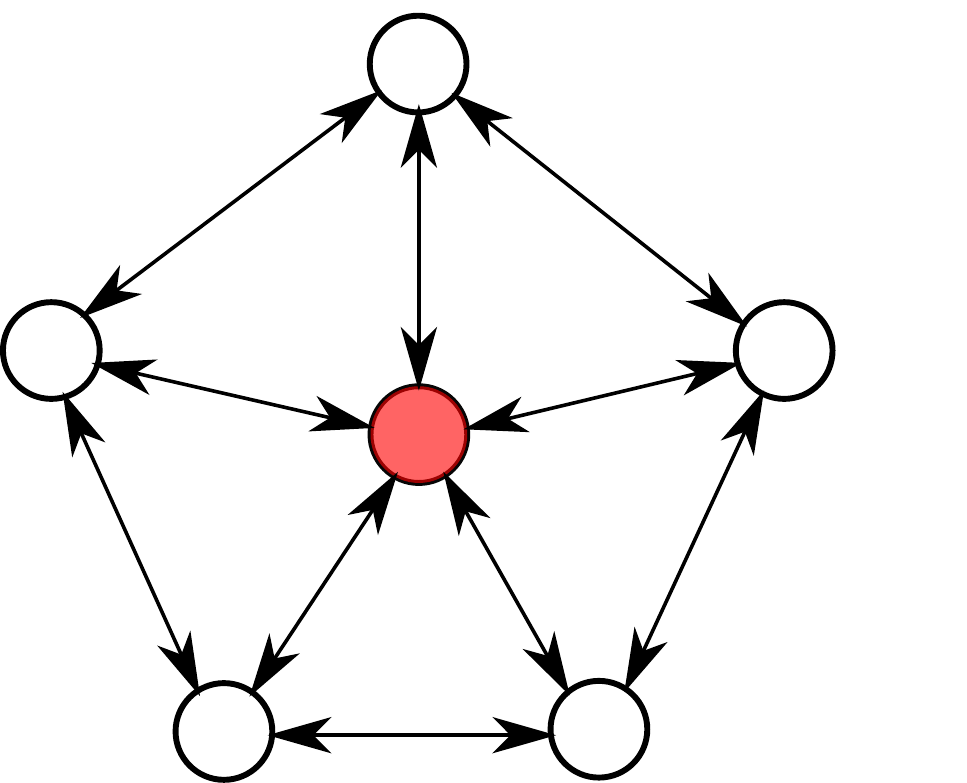}}%
    \put(0.41487183,0.72188377){\color[rgb]{0,0,0}\makebox(0,0)[lt]{\lineheight{1.25}\smash{\begin{tabular}[t]{l}$1$\end{tabular}}}}%
    \put(0.79084615,0.42509137){\color[rgb]{0,0,0}\makebox(0,0)[lt]{\lineheight{1.25}\smash{\begin{tabular}[t]{l}$2$\end{tabular}}}}%
    \put(0.60536297,0.032725){\color[rgb]{0,0,0}\makebox(0,0)[lt]{\lineheight{1.25}\smash{\begin{tabular}[t]{l}$3$\end{tabular}}}}%
    \put(0.21114937,0.03201319){\color[rgb]{0,0,0}\makebox(0,0)[lt]{\lineheight{1.25}\smash{\begin{tabular}[t]{l}$4$\end{tabular}}}}%
    \put(0.03219294,0.42765003){\color[rgb]{0,0,0}\makebox(0,0)[lt]{\lineheight{1.25}\smash{\begin{tabular}[t]{l}$5$\end{tabular}}}}%
    \put(0.41465288,0.33774075){\color[rgb]{0,0,0}\makebox(0,0)[lt]{\lineheight{1.25}\smash{\begin{tabular}[t]{l}$6$\end{tabular}}}}%
  \end{picture}%
\endgroup%

	\vspace{0cm}
	\caption{A wheel graph with the adversarial node in the middle, which is ($2f+1$)-connected but not strongly ($2f+1$)-robust.}\vspace{-2mm} \label{fig1} 
\end{figure} 

\begin{proposition} \rm \label{prop.imposible}
There exist networks with ($2f+1$)-connectivity, but not strong ($2f+1$)-robustness such that regular nodes cannot achieve resilient average consensus by performing SABA. 
\end{proposition}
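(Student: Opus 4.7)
The plan is to exhibit the explicit counterexample in Fig.~\ref{fig1}: the $6$-node wheel with outer cycle $1$-$2$-$3$-$4$-$5$-$1$ and hub~$6$ adjacent to every outer node, taking $f=1$ and letting node~$6$ be the single Byzantine adversary. First I would verify the two graph properties required by the statement. The wheel is well known to be $3$-connected (each outer node has in-degree $3$ and removing any two vertices still leaves a connected subgraph), matching $2f+1=3$. To see that strong $3$-robustness fails, I would take $\mathcal{S}=\{1,2\}$: each of its nodes has exactly two incoming neighbors outside $\mathcal{S}$, namely $\{5,6\}$ for node $1$ and $\{3,6\}$ for node $2$, so $\mathcal{S}$ is only $2$-reachable; moreover $|\mathcal{V}\setminus\mathcal{S}|=4$ exceeds the in-degree of every outer node, so no node of $\mathcal{S}$ has $\mathcal{V}\setminus\mathcal{S}$ as a subset of its in-neighborhood either. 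Both clauses of Definition~\ref{df.Srob} therefore fail.

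Next I would spell out the adversarial strategy that defeats SABA. At $k=1$, node~$6$ sends five pairwise distinct fake values $b_1,\dots,b_5\in\mathcal{I}$ for its own label to outer nodes $1,\dots,5$ respectively; by the $k=1$ branch of Algorithm~\ref{alg1} each regular node $i$ thereupon records $m_6^i[1]=b_i$ directly, without voting. I would also let node~$6$ relay the true values $x_1[0],\dots,x_5[0]$ faithfully to all of its neighbors, so that each outer node accumulates $m_j^i[k]=x_j[0]$ for $j=1,\dots,5$ by $k=2$ via the majority step (each outer node has at least two honest in-neighbors truthfully reporting $x_j[0]$). For $k\ge 2$, however, node~$6$ broadcasts, for its own label~$6$, a fresh value $c\notin\{b_1,\dots,b_5\}$. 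Then at every outer node $i$ the three in-neighbors $i-1$, $i+1$, and $6$ report the three pairwise distinct values $b_{i-1}$, $b_{i+1}$, $c$ for label~$6$. Since $f+1=2$ agreeing votes are required, the majority step never fires for label~$6$, and $m_6^i[k]=b_i$ persists for all $k$ and all outer $i$.

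Finally I would close the argument. The cumulative average~(\ref{eq.qaver}) at outer node $i$ eventually takes the form $\phi_i[k]=\tfrac{1}{6}\bigl(x_1[0]+x_2[0]+x_3[0]+x_4[0]+x_5[0]+b_i\bigr)$, which depends on $i$ because the $b_i$'s were chosen pairwise distinct. Since the filter~(\ref{eq.upd}) is Schur stable, $\lim_{k\to\infty}x_i[k]$ inherits the same $i$-dependence, contradicting Definition~\ref{df.resavrg}. The one technical point I anticipate arguing with care is that \emph{no} alternative Byzantine choice at $k\ge 2$---in particular opportunistically echoing one of the honest cycle neighbors' $b_j$'s---can resynchronize the corrupted beliefs $\{b_i\}_{i=1}^{5}$ across the outer cycle; the ``fresh distinct value'' strategy above avoids this outright by leaving the adversary with no honest vote to match.
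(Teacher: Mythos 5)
Your choice of counterexample graph is the same as the paper's (the wheel of Fig.~\ref{fig1} with the hub Byzantine, $f=1$), and your verification that it is $3$-connected but not strongly $3$-robust is correct; the set $\mathcal{S}=\{1,2\}$ (or, more to the point, $\{3,4\}$) indeed witnesses the failure of Definition~\ref{df.Srob}. However, the adversarial strategy you build on top of this is not the paper's, and it has a genuine defect: you attack the adversary's \emph{own} label~$6$, by seeding distinct values $b_1,\dots,b_5$ at $k=1$ and then switching to a fresh value $c$. This attack makes no use whatsoever of the graph being only $2$-reachable on some subsets --- it exploits only the unconditional acceptance of a direct neighbor's own value at $k=1$, and would succeed verbatim on a strongly $(2f+1)$-robust graph, e.g.\ on $\mathcal{K}_6$. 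If it were a valid refutation of resilient average consensus, it would therefore also refute Theorems~\ref{th.alg1} and~\ref{th.sum}. The resolution is that your attack is neutralized by the paper's semantics: Theorem~\ref{th.alg1} only guarantees retrieval of \emph{regular} labels, Definition~\ref{df.resavrg} defines $x_a$ over $\mathcal{R}$ only, and the detection scenarios following Theorem~\ref{th.alg1} (in particular scenario~(ii), triggered exactly because node~$6$ changes its broadcast for label~$6$ from $b_i$ to $c$) let every neighbor flag node~$6$ and discard its label --- which is the \emph{correct} outcome, not a failure. So your limiting value $\phi_i=\frac{1}{6}(x_1[0]+\cdots+x_5[0]+b_i)$ does not survive the intended bookkeeping.

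The missing idea is that the attack must target the retrieval of a \emph{regular} node's initial value and exploit the $2$-reachability bottleneck directly. The paper lets node~$1$ be an honest source with $x_1[0]=a$; nodes~$2$ and~$5$ accept $a$ at $k=1$, but node~$6$ relays $b$ to node~$3$ and $c$ to node~$4$ on behalf of node~$1$. Node~$3$ then sees only one honest report of $a$ (from node~$2$) against one Byzantine report of $b$, and node~$4$ symmetrically sees $a$ once against $c$; since acceptance in Algorithm~\ref{alg1} requires $f+1=2$ agreeing reports, neither node ever accepts $x_1[0]$, so resilient retrieval --- and hence averaging --- fails. This is precisely where the failure of strong $3$-robustness (the set $\{3,4\}$ having only two in-neighbors outside itself, one of which is Byzantine) enters the argument, and it is the step your proposal lacks.
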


\begin{proof}
	For $f=1$, consider the wheel graph in Fig.~\ref{fig1}, wherein the node 6 is adversarial. It can be proved that wheel graphs are $3$-connected \cite{gross2005graph}. Take node 1 as the source of its initial state value $x_1[0]=a \in \mathcal{I}$, which is fed in its output vector $m_1^1[0]$. At $k=0$, we have $m_1^i[0]=[ \ ]_{1 \times 1}$, $\forall i \neq 1$. At $k=1$, nodes 2 and 5 receive $m_1^1[0]$ and accept it as the true initial value of the node 1, i.e. $m_1^2[1]=m_1^5[1]=a$. At this time instant, the adversarial node 6 decides to send two different values $b,c \in \mathcal{I}$ on behalf of node 1 to nodes 3 and 4, respectively. So, at $k=2$, node 3 receives $m_1^2[1]=a$ from node 2 and $m_1^6[1]=b$ from node 6. Accordingly, node 4 receives $m_1^5[1]=a$ from node 5 and $m_1^6[1]=c$ from node 6. Based on the SABA, nodes 3 and 4 do not have enough data ($f+1=2$ consistent value) and cannot decide which value to accept as the true initial value of the node 1 for all the future time steps. 
\end{proof}

Next, we propose another necessary condition for resilient average consensus of a network under $f$-total adversarial model based on strongly $r$-connectivity. We use an alternative definition of strongly $r$-connected digraphs inferred from Menger's Theorem \cite{west1996introduction} stated as follows:

\begin{lemma} \rm \label{lm.menger}
A digraph $\mathcal{D}$ is strongly $r$-connected if and only if for any pair of nodes $i,j \in \mathcal{V}$ there exist $r$ disjoint paths from $i$ to $j$ in $\mathcal{D}$.
\end{lemma}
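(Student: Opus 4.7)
The plan is to derive Lemma~\ref{lm.menger} as a direct specialization of the classical vertex form of Menger's theorem for digraphs, which asserts that for any two distinct nodes $u,v$ of a digraph the maximum number of internally vertex-disjoint directed paths from $u$ to $v$ equals the minimum size of a vertex set $S \subseteq \mathcal{V}\setminus\{u,v\}$ whose removal leaves no directed $u$-to-$v$ path. I would first align Definition~\ref{df.con} with this quantity: by the definition of $\kappa_3(\mathcal{D})$, strongly $r$-connectivity is the statement that $\kappa_3(\mathcal{D})\geq r$, i.e.\ that removing any set of strictly fewer than $r$ vertices leaves a digraph that is still strongly connected and non-trivial.

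For the forward direction, I would argue by contradiction. Suppose $\mathcal{D}$ is strongly $r$-connected but there is an ordered pair $(i,j)$ admitting fewer than $r$ internally disjoint directed paths from $i$ to $j$. By Menger's theorem there exists a separator $S\subseteq\mathcal{V}\setminus\{i,j\}$ with $|S|<r$ such that no directed $i$-to-$j$ path survives in $\mathcal{D}\setminus S$. Then $\mathcal{D}\setminus S$ fails to be strongly connected, contradicting $\kappa_3(\mathcal{D})\geq r$.

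For the reverse direction, assume that every ordered pair of nodes is joined by $r$ internally disjoint directed paths, yet $\mathcal{D}$ is not strongly $r$-connected. Then some vertex set $S$ with $|S|<r$ makes $\mathcal{D}\setminus S$ either non-strongly-connected or trivial. The trivial subcase is easily ruled out once $r\geq 2$, since the existence of $r\geq 2$ internally disjoint paths between every pair forces $|\mathcal{V}|$ to be large enough that removing fewer than $r$ vertices cannot leave at most one node. In the remaining case there exist $i,j\in\mathcal{V}\setminus S$ with no directed $i$-to-$j$ path in $\mathcal{D}\setminus S$, so $S$ is an $(i,j)$-separator of size strictly less than $r$. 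Menger's theorem then forces at most $|S|<r$ internally disjoint $i$-to-$j$ paths in $\mathcal{D}$, contradicting the hypothesis.

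The main obstacle is essentially bookkeeping rather than mathematics: I must make sure that ``disjoint paths'' in the lemma is read as internally vertex-disjoint so that Menger's theorem applies verbatim, and I must be careful with the asymmetry inherent in digraphs by applying the argument to the ordered pair $(i,j)$ (the same reasoning also yields $r$ disjoint $j$-to-$i$ paths, since both directions are required by strong connectivity). Beyond that, the degenerate ``trivial'' clause in the definition of $\kappa_3$ is the only edge case needing explicit treatment.
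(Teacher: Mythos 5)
The paper gives no proof of this lemma at all---it is stated as a direct citation of Menger's theorem from \cite{west1996introduction}---and your derivation of the equivalence from the local vertex form of Menger's theorem, via the identification of strong $r$-connectivity with $\kappa_3(\mathcal{D})\geq r$, is exactly the standard argument, so the two approaches coincide. The one detail worth making explicit is that the local form of Menger's theorem applies only to non-adjacent pairs (if $(i,j)\in\mathcal{E}$ no separator $S\subseteq\mathcal{V}\setminus\{i,j\}$ exists), so the adjacent case needs the usual edge-deletion/induction step---a caveat the paper itself implicitly concedes when the proof of Theorem~\ref{th.necsuf} adds ``or the two nodes are directly connected together.''
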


\begin{theorem} \rm \label{th.necsuf}
If a network with underlying graph of $\mathcal{D}$ achieves average consensus by executing SABA under $f$-total adversarial model, then $\mathcal{D}$ is strongly $(2f+1)$-connected.
\end{theorem}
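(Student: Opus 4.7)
The plan is to argue by contraposition: assume $\mathcal{D}$ is not strongly $(2f+1)$-connected and construct an $f$-totally bounded Byzantine configuration under which some regular node $j$ never retrieves the initial value of another regular node $i$, so that (\ref{eq.qaver})--(\ref{eq.upd}) cannot converge to $x_a$ at $j$. By Lemma~\ref{lm.menger}, the hypothesis yields distinct $i, j \in \mathcal{V}$ admitting at most $2f$ internally node-disjoint directed paths from $i$ to $j$; Menger's theorem then supplies an internal vertex cut $\mathcal{C} \subset \mathcal{V} \setminus \{i, j\}$ with $|\mathcal{C}| \leq 2f$ that meets every directed $i \to j$ walk. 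Let $\mathcal{V}_i$ denote the vertices reachable from $i$ in $\mathcal{D} \setminus \mathcal{C}$ and $\mathcal{V}_j = \mathcal{V} \setminus (\mathcal{C} \cup \mathcal{V}_i)$, so that $j \in \mathcal{V}_j$ and every edge from $\mathcal{V}_i$ into $\mathcal{V}_j$ crosses $\mathcal{C}$.

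Partition the cut as $\mathcal{C} = \mathcal{A} \cup \mathcal{R}_c$ with $|\mathcal{A}| = \min(f, |\mathcal{C}|) \leq f$ and $|\mathcal{R}_c| = |\mathcal{C}| - |\mathcal{A}| \leq f$, designate the nodes in $\mathcal{A}$ as Byzantine (so that $i, j$ stay regular and the configuration is $f$-totally bounded), set $a = x_i[0]$, and program every $s \in \mathcal{A}$ to broadcast a fixed value $b \neq a$ in slot $i$ at every step. The core step is a time induction showing that no regular $u \in \mathcal{V}_j$ ever populates $m_i^u[k]$. At step $k+1$, the incoming messages reaching $u$ in slot $i$ split by origin into: at most $|\mathcal{R}_c| \leq f$ copies of $a$ from in-neighbors in $\mathcal{R}_c$; at most $|\mathcal{A}| \leq f$ copies of $b$ from in-neighbors in $\mathcal{A}$ (any other values these adversaries fabricate only reduce coincidences); and no contribution from in-neighbors in $\mathcal{V}_j$ by the inductive hypothesis. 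Hence the $f+1$ threshold of SABA is never met and $m_i^u[k+1]$ stays empty. Specializing to $u = j$, node $j$ never retrieves $x_i[0]$, so $\phi_j[k]$ in (\ref{eq.qaver}) does not depend on $x_i[0]$; choosing $x_i[0]$ so that $x_a$ differs from the mean of the other regular initial values then forces $\lim_{k \to \infty} x_j[k] \neq x_a$ in (\ref{eq.upd}), contradicting Definition~\ref{df.resavrg}.

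The main obstacle is to rule out any internal "boosting" of slot-$i$ information inside $\mathcal{V}_j$, i.e., the scenario in which some $\mathcal{V}_j$-regular accepts $a$ via a cascade of rebroadcasts from other $\mathcal{V}_j$-regulars. The induction closes this loophole precisely because SABA is a strict threshold forwarder --- a regular node rebroadcasts an entry only after acceptance --- so the supply of $a$-copies entering $\mathcal{V}_j$ is permanently capped at the cross-cut capacity $|\mathcal{R}_c| \leq f$, one short of the majority. The boundary subcase $|\mathcal{C}| \leq f$, in which $\mathcal{R}_c = \emptyset$ and no copy of $a$ ever reaches $\mathcal{V}_j$, is absorbed into the same inequality, as is the opposite extreme $|\mathcal{C}| = 2f$ in which both halves of the partition are maximal; no separate casework is needed.
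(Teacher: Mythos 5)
Your proposal is correct and follows the same basic line as the paper's proof: both invoke Lemma~\ref{lm.menger} to extract a pair $i,j$ with at most $2f$ internally node-disjoint paths, place $f$ Byzantine nodes so that node $j$ sees at most $f$ consistent copies of $x_i[0]$, and conclude that the $f+1$ majority threshold of SABA is never met. Where you genuinely improve on the paper is in the middle step. The paper reasons only about the $2f$ disjoint paths themselves (``$f$ paths each with an adversary, $f$ clean paths''), which tacitly assumes that all slot-$i$ information flowing to $j$ travels along those particular paths; that is not justified, since $a$ could in principle be amplified along other routes. Your version closes this hole: passing from the path count to an explicit vertex cut $\mathcal{C}$ with $\vert \mathcal{C} \vert \leq 2f$, partitioning $\mathcal{V}\setminus\mathcal{C}$ by reachability from $i$, and running the time induction over all regular nodes of $\mathcal{V}_j$ shows that the supply of $a$-copies into the separated component is permanently capped at $\vert \mathcal{R}_c \vert \leq f$, no matter what routes exist. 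The acceptance-before-rebroadcast observation is exactly the right reason the induction closes. One caveat, which your write-up shares with the paper's: the cut form of Menger's theorem only yields $\mathcal{C} \subset \mathcal{V}\setminus\{i,j\}$ when $(i,j)\notin\mathcal{E}$; if the deficient pair happens to be adjacent, $j$ receives $x_i[0]$ directly at $k=1$ and the construction must be rerouted to a different witness pair (or the non-adjacency of some deficient pair must be argued). Flagging and handling that boundary case would make your argument strictly stronger than the one in the paper.
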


\begin{proof}
According to Lemma~\ref{lm.menger}, if $\mathcal{D}$ is strongly $(2f+1)$-connected, there exist at least $2f+1$ internally node-disjoint paths between each pair of nodes or the two nodes are directly connected together. We use contradiction to prove that if a network achieves consensus by executing SABA under $f$-total adversarial model, its underlying graph is strongly $(2f+1)$-robust. Note that accomplishing resilient retrieval is equivalent to achieving average consensus as stated before. Assume that there exist a pair of nodes $i,j \in \mathcal{D}$ that there exist at most $2f$ internally node-disjoint paths between them. Suppose that we have $f$ paths, each has an adversarial node, and $f$ paths that are constructed only by regular nodes. The adversarial nodes can change the initial state value of node $i$ on the path to node $j$. Take the actual initial state value of node $i$ as $x_i[0]=a \in \mathcal{I}$. However, adversarial nodes broadcast $m_n^i[k]=b \in \mathcal{I}$, $n \in \mathcal{A}$ behind of node $i$. In this situation, node $j$ have maximum $f$ initial state values that are consistent. Therefore, it cannot conclude the majority voting and retrieve all the initial state values for averaging.
\end{proof}
\vspace{-.4cm}
\subsection{Asynchronous Averaging}
There may be delays in communications over the network and nodes may update asynchronously. To analyze the effects of asynchrony and delays on our proposed strategy, we set the following assumption on the communications protocol. Then, we propose a modified version of the SABA for asynchronous networks and a slightly different update rule.

\begin{assumption} \rm
	All nodes make, at least, an update within $\bar{k}$ steps and communication delays are upper-bounded by $\bar{\tau}$ steps.
\end{assumption}

Note that in the asynchronous version of the SABA, each node $i \in \mathcal{R}$ utilizes an extra memory buffer for each of its neighbors. It stores the most recently received data packets synchronized by the network clock. The following procedure describes Algorithm~\ref{alg2}. It is referred to as the 
Asynchronous Secure Accepting and Broadcasting Algorithm (ASABA):

\begin{itemize}
	\item[i)] Each node $i \in \mathcal{R}$ expects to find only one initial state value in the packets it receives from each of its incoming neighbors up to time instant $t = \bar{k} + \bar{\tau}$ (this is because each node ensures that it has received data packets from all its neighbours) and updates its memory $m_j^i[k]=m_j^j[k-\widetilde{k}_{ij} - \tau_{ij}]$, where $j \in \mathcal{N}_i^-$, $\tau_{ij} \leq \bar{\tau}$ is the time delay of the last data packet that node $i$ has received from node $j$ (it may be time-varying) and $\widetilde{k}_{ij}$ denotes the time steps elapsed from the time that node $i$ has received the packet of the node $j$ up to the time it makes an update ($\widetilde{k}_{ij} < \bar{k}$).
	
	\item[ii)] Each node $i \in \mathcal{R}$, after accepting the initial state values in the first stage of the algorithm, uses the most recent data packets received from its neighbors $m^j[k-\widetilde{k}_{ij} - \tau_{ij}]$, $j \in \mathcal{N}_i^-$, to update its memory $m^i[k]$ regardless of asynchrony and delays in communications and broadcasts it to all its neighbour.
\end{itemize}

\begin{algorithm}[t]
\footnotesize
	\SetAlgoLined
	\textbf{Initialization}\\
	The regular node $i$ creates a persistent empty memory $m^i[0]=[ \ ]_{1 \times \bar{N}}$, where $\bar{N} \geq N$, and sets $m_i^i[0] = x_i[0]$.\\
	\If{$k \leq \bar{k} + \bar{\tau}$}{
		
		\For{$j \in \mathcal{N}_i^-$}{
			The regular node $i$ broadcasts $m^i[0]$ to its outgoing neighbors $\mathcal{N}_i^+$, takes the last data packet $m^j[k-\widetilde{k}_{ij} - \tau_{ij}]$ in its receiving buffer, and updates its memory as $m_j^i[k]=m_j^j[0]$, $j \in \mathcal{N}_i^-$.
		}
	}
	\If{$k > \bar{k} + \bar{\tau}$}{
		\For{$n \in \{1,2,\ldots , \bar{N} \}$}{
			If the regular node $i$ received an identical value $m_n^j[k-\widetilde{k}_{ij} - \tau_{ij}]$, $j \in \mathcal{N}_i^-$ from $f + 1$ incoming neighbors, then it accepts that value and saves it in the memory $m_n^i[k]$.
		}
		The regular node $i$ broadcasts its memory vector $m^i[k]$ to all its neighbors.\\
	}
	\hrulefill \\
	\KwResult{$ m^i[k]=[m_1^i[k], \ldots , m_{\bar{N}}^i [k]]$, $\bar{N} \geq N$
	}

	\caption{Asynchronous Secure Accepting and Broadcasting Algorithm (ASABA)} 
	\label{alg2}
	
\end{algorithm}

Note that ASABA has to be executed for more time steps than the SABA so that the network retrieves all the initial values. 

\begin{theorem} \rm \label{th.alg2}
	Each node $i \in \mathcal{R}$ in the network $\mathcal{D}$ with communication delays and asynchrony, by executing the ASABA for $\bar{K} \geq (2N-1)(\bar{k}+\bar{\tau})$ steps, will retrieve $x_\ell[0]$, $\ell \in \mathcal{R} \setminus \{i \}$ under the $f$-local adversarial model if $\mathcal{D}$ is strongly $(2f + 1)$-robust.
\end{theorem}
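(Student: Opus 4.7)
The plan is to mimic the proof of Theorem~\ref{th.alg1} but replace every ``one time step'' in the synchronous propagation argument with a worst-case ``one asynchronous round'' of length $\bar{k}+\bar{\tau}$. Fix a regular source $\ell \in \mathcal{R}$ and let $\mathcal{U}\subseteq \mathcal{R}$ denote the set of regular nodes that have not yet accepted $x_\ell[0]$ after the algorithm has run long enough for one such round. Direct in-neighbors of $\ell$ receive $x_\ell[0]$ within $\bar{k}+\bar{\tau}$ steps (source sends within $\bar{k}$, the packet takes at most $\bar{\tau}$) and accept it immediately since no majority vote is required in the first stage. So the base case of the induction holds.

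For the inductive step, apply Definition~\ref{df.Srob} to $\mathcal{V}\setminus\mathcal{U}$: either some node in $\mathcal{U}$ has $2f+1$ in-neighbors in $\mathcal{V}\setminus\mathcal{U}$, or some $i\in\mathcal{U}$ already has every other node as an in-neighbor. In both cases, at most $f$ of the in-neighbors outside $\mathcal{U}$ are adversarial by the $f$-local model, so at least $f+1$ regular in-neighbors of $i$ already hold the accepted label $m_\ell^{\cdot}[\,\cdot\,] = x_\ell[0]$ and (per step~(ii) of ASABA) keep rebroadcasting it. Within another $\bar{k}+\bar{\tau}$ steps each of these $f+1$ regular in-neighbors will have transmitted the value to $i$ (each updates within $\bar{k}$, each packet arrives within $\bar{\tau}$), so $i$'s receive buffer records $f+1$ identical values for label $\ell$ and the majority rule causes $i$ to accept. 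Hence $|\mathcal{U}|$ decreases by at least one every $\bar{k}+\bar{\tau}$ steps, and after $(N-1)(\bar{k}+\bar{\tau})$ steps every regular node has accepted $x_\ell[0]$.

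The second ingredient is the doubling used in Theorem~\ref{th.alg1}: a Byzantine source might begin advertising a value in $\mathcal{I}$ and then switch to a different value, forcing the regular nodes to re-propagate the ``new'' value. In the worst case this costs another $(N-1)(\bar{k}+\bar{\tau})$ rounds of the same inductive propagation, for a cumulative bound of $(2N-1)(\bar{k}+\bar{\tau})$, i.e.\ exactly the stated $\bar{K}$. Running ASABA for this many steps therefore guarantees retrieval of $x_\ell[0]$ at every $i\in\mathcal{R}\setminus\{\ell\}$, and repeating the argument for every $\ell\in\mathcal{R}$ completes the proof.

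The part I expect to require the most care is the per-hop delay accounting: one must verify that the reasoning is still valid when different regular in-neighbors of $i$ update at different, possibly staggered, times and when the packet from node $j$ may carry a memory snapshot $m^j[k-\widetilde{k}_{ij}-\tau_{ij}]$ that is already stale. The key observation that rescues the argument is that once a regular node accepts a label, ASABA instructs it to include that label in every subsequent outgoing packet, so even a stale snapshot from a node that accepted $x_\ell[0]$ earlier still carries the correct value. Combining this monotonicity with the uniform bounds $\widetilde{k}_{ij}<\bar{k}$ and $\tau_{ij}\leq\bar{\tau}$ gives the $\bar{k}+\bar{\tau}$ per-hop worst case, which is the only place asynchrony enters.
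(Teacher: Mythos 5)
Your proof is correct and follows essentially the same route as the paper: the paper simply states that the argument of Theorem~\ref{th.alg1} carries over and multiplies the $2N-1$ synchronous steps by the worst-case per-hop cost $\bar{k}+\bar{\tau}$. Your version is in fact more detailed than the paper's (spelling out the induction on the set $\mathcal{U}$ and the monotonicity of accepted labels under stale snapshots), and the extra care about staggered updates is a welcome addition rather than a deviation.
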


\begin{proof}
	The proof is the same as Theorem~\ref{th.alg1}. We just qualify the argument finding a lower bound for $\bar{K}$ in the presence of communication delays and asynchrony. In the synchronous case, we saw that to ensure that all initial values retrieved by all the nodes, SABA must be executed for $2N-1$ time steps. Each transmission in a network with asynchrony and communication delays can take up to $\bar{k}+\bar{\tau}$ steps in the worst case. Thus, to ensure that the network accomplishes the retrieval of initial state values, each node $i$ has to execute ASABA for $\bar{K} \geq (2N-1)(\bar{k}+\bar{\tau})$ steps.
\end{proof}

Moreover, the general form of update rule (\ref{eq.upd}) for asynchronous network is:

\begin{equation} \label{eq.asupd}
\begin{aligned} 
x_i[0]&= \phi_i[0],\\
x_i[k]&=\epsilon_i x_i[k_i^-] + (1-\epsilon_i) \phi_i[k] \ \forall k>0,
\end{aligned}
\end{equation}
where, $k_i^-$ is the last time before $k$ that the node $i$ made an update at. Note that $\phi_i[k]$ has been updated by the last initial values in $m^i[k]$ which are received up to $t=k$.\\

Note that Theorem~\ref{th.necsuf} is also valid for the asynchronous case as we only used topology concepts in its proof. We also simply generalize Theorem~\ref{th.sum} for an asynchronous network with delays in communications as follows.

\begin{proposition} \rm \label{propo.asynave}
	Each node $i \in \mathcal{R}$ achieves average consensus by executing the ASABA for $\bar{K} \geq (2N-1)(\bar{k}+\bar{\tau})$ steps and performing the update rule (\ref{eq.asupd}) under the $f$-local adversarial model and in the presence of communication delays and asynchrony if the network is strongly $(2f + 1)$- robust.
\end{proposition}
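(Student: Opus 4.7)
The plan is to mirror the proof of Theorem~\ref{th.sum} with the two asynchronous ingredients substituted in. The first step is to invoke Theorem~\ref{th.alg2}, which already handles retrieval under asynchrony and delays: since $\mathcal{D}$ is strongly $(2f+1)$-robust and the $f$-local model holds, every regular node $i$ has secured $x_\ell[0]$ for all $\ell \in \mathcal{R}$ by time $T^{*} := (2N-1)(\bar{k}+\bar{\tau})$. Consequently $\mathbb{M}^i[k]\supseteq \mathcal{R}$ for all $k \geq T^{*}$, and hence by (\ref{eq.qaver}) we have $\phi_i[k]=x_a$ for every $k\geq T^{*}$ and every $i\in\mathcal{R}$. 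This reduces the problem to showing that the asynchronous filter (\ref{eq.asupd}), driven by an input that is eventually constant at $x_a$, drives $x_i[k]$ to $x_a$.

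For the second step I would subtract $x_a$ from both sides of (\ref{eq.asupd}). For $k\geq T^{*}$ this gives the clean recursion
\begin{equation}\label{eq.err}
e_i[k] = \epsilon_i\, e_i[k_i^-], \qquad e_i[k]:=x_i[k]-x_a,
\end{equation}
whenever $k$ is an update instant of node $i$, while $e_i[k]$ is held constant between updates. Let $\{k_i^{(m)}\}_{m\geq 0}$ denote the increasing sequence of update instants of node $i$ with $k_i^{(0)}$ the first such instant exceeding $T^{*}$. Iterating (\ref{eq.err}) yields $e_i[k_i^{(m)}]=\epsilon_i^{\,m}\,e_i[k_i^{(0)}]$. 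Since the assumption that every node updates at least once in every window of length $\bar{k}$ guarantees $k_i^{(m)}\leq k_i^{(0)}+m\bar{k}$, the counter $m$ tends to infinity as $k\to\infty$, and because $0\leq \epsilon_i<1$ we conclude $e_i[k]\to 0$, i.e. $x_i[k]\to x_a$. The argument holds for every $i\in\mathcal{R}$, which is exactly the resilient average consensus condition of Def.~\ref{df.resavrg}.

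The only delicate point, and the one I would highlight as the main obstacle, is the transient behaviour on the interval $0\leq k<T^{*}$: during this phase $\phi_i[k]$ is a time-varying linear combination of an expanding subset of the $\{x_\ell[0]\}_{\ell\in\mathcal{R}}$ rather than $x_a$, and the asynchronous update times of different nodes are not aligned. What one has to verify is merely that $x_i[k]$ remains bounded (which follows because (\ref{eq.asupd}) is a convex combination of $x_i[k_i^-]$ and $\phi_i[k]$, both of which lie in the convex hull of $\{x_\ell[0]:\ell\in\mathcal{R}\}\subset\mathcal{I}$); the boundedness of $e_i[k_i^{(0)}]$ is then automatic, and the geometric decay in (\ref{eq.err}) completes the proof. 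No new topological argument is needed beyond Theorem~\ref{th.alg2}, which is why strong $(2f+1)$-robustness remains the right sufficient condition.
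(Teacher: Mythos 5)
Your proposal is correct and follows exactly the route the paper intends: the paper gives no explicit proof, stating only that Proposition~\ref{propo.asynave} is obtained by "simply generalizing" Theorem~\ref{th.sum} via Theorem~\ref{th.alg2}, and your argument is precisely that combination (asynchronous retrieval completes by $(2N-1)(\bar{k}+\bar{\tau})$, after which $\phi_i[k]=x_a$ and the filter error decays geometrically since each node updates within every window of length $\bar{k}$). You actually supply more detail than the paper does, in particular the bounded-transient observation and the explicit update-instant indexing, but there is no methodological divergence.
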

\vspace{-.3cm}
\section{Topology Analysis} \label{sec.topology}
In this section, we analyze strong robustness in terms of time complexity and connectivity and compare it with other topology constraints.
\vspace{-.3cm}
\subsection{Time Complexity} \label{sec.cox}
The most recent work in the literature,  \cite{tseng2015broadcast}, proposed a graph condition for convergence of the CPA in the presence of $f$-locally bounded adversaries.

\begin{definition} \rm \label{df.LMA}
Graph $\mathcal{G}(\mathcal{V}, \mathcal{E})$ is $f$-resilient w.r.t. $s$ if the subsets $\mathcal{A},\mathcal{L},\mathcal{M}$ form a partition of $\mathcal{V}$, such that:
\begin{itemize}
\item[i)] source $s \in \mathcal{L}$
\item[ii)] $\mathcal{M}$ is nonempty
\item[iii)] $\mathcal{A}$ is a $f$-local adversarial set, and at least one of the following statements holds:
\begin{itemize}
\item[a)] There exists a node $ v \in \mathcal{M}$ that has at least $f+1$ distinct incoming neighbors in $\mathcal{L}$, or 
\item[b)] $\mathcal{M}$ contains an outgoing neighbor of $s$, i.e. $\mathcal{N}_s^+ \cap \mathcal{M} \neq \emptyset$. 
\end{itemize}
\end{itemize}
\end{definition} 

According to this condition, the CPA from a source node $s$ is correct under $f$-local adverserial model if and only if graph $\mathcal{G}$ is $f$-resilient w.r.t. $s$. However, this constraint is based on robustness of the induced subgraph of regular nodes with respect to $s$, which means that either Byzantine nodes and the source node must be known beforehand and we check the strong robustness of the induced subgraph of regular nodes with respect to $s$ or we have to exhaustively check the conditions in Definition~\ref{df.LMA} on $\mathcal{G}$ for all possible adverserial subsets and source nodes -- in case, we call it $f$-resilient. The first assumption seems strange, while there is no preliminary detection. The second assumption imposes a higher time complexity with respect to strong robustness. Therefore, we use the strong robustness notion as the sufficient condition for the convergence of our algorithms, while the condition in \cite{tseng2015broadcast} can be our necessary condition. The relation between these two conditions is still an open problem.

Here, we propose Algorithm~\ref{alg.robcheck} and Algorithm~\ref{alg.tightcheck} to calculate the time complexity required for each condition. To measure and compare the time complexities, we mention the following standard notion from computer science literature \cite{knuth1976big}.

\begin{algorithm}[t]
\footnotesize
	\SetAlgoLined
	\textbf{Initialization}\\
	Graph $\mathcal{D}(\mathcal{V,E})$ with $N$ nodes is the input.\\
	\For{$\alpha = 1, \ldots, \sum_{\xi=1}^{N} \binom{N}{\xi}$}{
		Find the nonempty and non-repetitive subset $\mathcal{S}_\alpha \subseteq \mathcal{V}$.\\
		
		\For{$i \in \mathcal{S}_\alpha $}{
		Check the incoming edges of node $i$.\\
			\If{$\vert \mathcal{N}_i^- \setminus \mathcal{S}_\alpha \vert \geq 2f+1$ \rm or $\mathcal{V} \setminus \mathcal{S}_\alpha \subseteq \mathcal{N}_i^-$}{$c(\alpha)=1$,\\
			\textbf{Goto} line 3.}
		}
	}
	\hrulefill \\
	\KwResult{If all elements of $c$ equals to 1, the graph satisfies the Strong Robustness Condition.
	}
	\caption{Strongly $(2f+1)$-robustness check} 
	\label{alg.robcheck}
\end{algorithm}
		
\begin{definition} \rm \label{df.bigo} 
(Worst-case complexity) Consider $S, U, L: \mathbb{R} \rightarrow \mathbb{R}$, where $S$ is the number of steps required for termination of an algorithm. Then, $S(n) \in \mathcal{O}(U(n))$ if there exists $\delta \in \mathbb{R}_{>0}$ and $n_0 \in \mathbb{R}_{>0}$ such that $\vert S(n) \vert \leq \delta U(n)$ for all $n \geq n_0$. We say the worst-case complexity of an algorithm is $\mathcal{O}(U(n))$ if $S(n) \in \mathcal{O}(U(n))$.
\end{definition}

In other words, $U(n)$ is the upper bound of $S(n)$ for $n \geq n_0$. Furthermore, to check each incoming edge of a node or counting number of nodes in a set, we deal with a \textit{test} that is a basic operation in algorithms and has time complexity of $\mathcal{O}(1)$. In what follows, we calculate the upper bound for the number of basic operations required for termination of Algorithm~\ref{alg.robcheck} and Algorithm~\ref{alg.tightcheck}, considering the number of nodes in the graph as the only variable. 

\begin{theorem} \rm \label{th.coxrob}
Consider Algorithm~\ref{alg.robcheck} and digraph $\mathcal{D} = (\mathcal{V}, \mathcal{E})$ with $N$ nodes. The worst case complexity of Algorithm~\ref{alg.robcheck} for checking strong robustness of $\mathcal{D}$ is $\mathcal{O}(N^2 2^N)$.
\end{theorem}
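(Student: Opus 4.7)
The plan is to bound the total number of basic operations by directly counting the work done in each nested loop of Algorithm~\ref{alg.robcheck}, with the outer loop over subsets contributing the exponential factor and the inner loops contributing the polynomial factor.

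First I would count the number of iterations of the outermost loop. Since the algorithm enumerates every nonempty subset of $\mathcal{V}$ exactly once, the loop runs
\[
\sum_{\xi=1}^{N} \binom{N}{\xi} \;=\; 2^{N}-1
\]
times. For each fixed subset $\mathcal{S}_\alpha$, the algorithm enters an inner loop over the elements of $\mathcal{S}_\alpha$, which in the worst case has cardinality $N$, so at most $N$ iterations occur there.

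Next I would bound the cost of the body of the inner loop, which consists of checking whether $|\mathcal{N}_i^- \setminus \mathcal{S}_\alpha| \geq 2f+1$ or $\mathcal{V} \setminus \mathcal{S}_\alpha \subseteq \mathcal{N}_i^-$. Both checks can be performed by scanning $\mathcal{N}_i^-$ (of size at most $N-1$) and, for each incoming neighbor, performing an $\mathcal{O}(1)$ membership test against $\mathcal{S}_\alpha$; equivalently one may scan $\mathcal{V}\setminus \mathcal{S}_\alpha$ (also of size at most $N$) and test membership in $\mathcal{N}_i^-$. Either way the body costs $\mathcal{O}(N)$ basic tests.

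Multiplying the three nested bounds gives a total worst-case count of at most $(2^N-1)\cdot N \cdot c N$ tests for some constant $c>0$, so by Definition~\ref{df.bigo} the overall complexity is $\mathcal{O}(N^2 2^N)$. The only subtle point, which I would make explicit in the writeup, is that neither early-termination opportunity (the \textbf{Goto} in line 9, nor failing the condition on every inner iteration) shortens the worst case: the worst case is realized on a graph for which every subset is checked, and for each subset the final element tested is the one that certifies the condition. No step here requires real combinatorial ingenuity; the only thing to be careful about is accounting honestly for the inner membership tests so that the $N^2$ factor (rather than $N$) is justified.
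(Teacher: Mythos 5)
Your proposal is correct and takes essentially the same approach as the paper: enumerate the $2^N-1$ nonempty subsets, and for each subset charge the inner loop $\xi(N-\xi)\le N^2$ edge tests. The only difference is that the paper evaluates the resulting sum exactly (obtaining $N^2 2^{N-2}-N$ via a binomial identity) while you bound it termwise, and both yield $\mathcal{O}(N^2 2^N)$.
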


\begin{proof}
We define the function $S_{\ref{alg.robcheck}}(N)$ as the number of basic operations needed to perform Algorithm~\ref{alg.robcheck}, where $N$ is the number of nodes in digraph $\mathcal{D}$. In Algorithm~\ref{alg.robcheck} (line 3-4), number of all nonempty subsets $\mathcal{S} \subseteq \mathcal{V}$ is $\sum_{\xi=1}^{N} \binom{N}{\xi}$, where $\xi = \vert \mathcal{S} \vert$. Consider node $i \in \mathcal{S}$. Incoming edges of node $i$ from the nodes in $\mathcal{V} \setminus \mathcal{S}$ have to be checked (line 5-11). This can be done by $(N-\xi)\xi$ test operations. Therefore, in worst case, we have $S_{\ref{alg.robcheck}}(N) = \sum_{\xi=1}^{N} \binom{N}{\xi}(N-\xi)\xi$ test operations. Since $\binom{N}{\xi}(N-\xi)\xi=N \xi \binom{N-1}{\xi}$ and $S_{\ref{alg.robcheck}}(N)=0$ for $\xi = N$, we conclude $S_{\ref{alg.robcheck}}(N)=N \sum_{\xi=1}^{N-1} \xi \binom{N-1}{\xi}$. Using Binomial Theorem, we conclude that $S_{\ref{alg.robcheck}}(N)=N^2 2^{N-2}-N$. Therefore, the worst case complexity of Algorithm~\ref{alg.robcheck} is $\mathcal{O}(N^2 2^{N})$. 
\end{proof}

\begin{algorithm}[t]
\footnotesize
	\SetAlgoLined
	\textbf{Initialization}\\
	Graph $\mathcal{D}(\mathcal{V,E})$ with $N$ nodes is the input.\\
	\For{$s \in \mathcal{V}$}{
		Consider $s$ as a source node.\\
		\For{$\alpha = 1, \ldots , \sum_{\xi=1}^{N-1} \binom{N-1}{\xi}$}{
			Find the nonempty and non-repetitive subset of adversarial nodes $\mathcal{A}_\alpha \subseteq \mathcal{V} \setminus \{s\}$.\\
			\For{$\beta = 1, \ldots , \sum_{\eta=1}^{N- \vert \mathcal{A}_\alpha \vert -1} \binom{N - \vert \mathcal{A}_\alpha \vert - 1}{\eta}$}{
				Find the nonempty and non-repetitive subset $\mathcal{M}_\beta \subseteq \mathcal{V} \setminus (\mathcal{A}_\alpha \cup \{s\})$. The subsets $\mathcal{A}_\alpha, \mathcal{M}_\beta, \mathcal{L}_\beta$ are partitions of $\mathcal{V}$, where $\mathcal{L}_\beta = \mathcal{V} \setminus (\mathcal{A}_\alpha \cup \mathcal{M}_\beta)$ and	$s \in \mathcal{L}_\beta$.\\
				\For{$i \in \mathcal{M}_\beta$}{
					Check the incoming edges of node $i$.\\
					\If{$ \vert \mathcal{N}_i^- \cap \mathcal{A}_\alpha \vert \leq f$ \rm and $\big( \vert \mathcal{N}_i^- \cap \mathcal{L}_\beta \vert \geq f+1$ \rm or $(i,s)=1 \big)$}{$c(s,\alpha,\beta)=1$,\\
					\textbf{Goto} line 7.}
				}
			}
		}
	}
	\hrulefill \\
	\KwResult{If all elements of $c$ equals to 1, the graph satisfies the $\mathcal{LM}f$-condition.
	}
	\caption{$f$-resiliency check}
	\label{alg.tightcheck}
\end{algorithm}
\vspace{-.3cm} 
\begin{theorem} \rm \label{th.coxtight}
Consider Algorithm~\ref{alg.tightcheck} and digraph $\mathcal{D} = (\mathcal{V}, \mathcal{E})$ with $N$ nodes. The worst-case complexity of Algorithm~\ref{alg.tightcheck} to check $\mathcal{D}$ for $f$-resiliency is $\mathcal{O}(N^{3} 3^N)$.
\end{theorem}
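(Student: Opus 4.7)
The plan is to mirror the proof of Theorem~\ref{th.coxrob} but now with three nested partition loops instead of one subset loop. Let $S_{\ref{alg.tightcheck}}(N)$ denote the worst-case number of basic operations performed by Algorithm~\ref{alg.tightcheck} on a digraph with $N$ nodes. First I would separate the loop iterations from the per-iteration cost. The outer loop contributes a factor of $N$ for the choice of source $s$. For fixed $s$, the subset $\mathcal{A}_\alpha \subseteq \mathcal{V} \setminus \{s\}$ ranges over all nonempty subsets of a set of size $N-1$, and given $\mathcal{A}_\alpha$ with $|\mathcal{A}_\alpha|=\xi$, the subset $\mathcal{M}_\beta$ ranges over nonempty subsets of a set of size $N-1-\xi$. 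Inside these loops, the innermost \textbf{for} iterates over $i \in \mathcal{M}_\beta$, so $|\mathcal{M}_\beta|=\eta$ node visits, and for each such $i$ the test in line~12 requires examining the incoming neighbors of $i$ and intersecting with $\mathcal{A}_\alpha$ and $\mathcal{L}_\beta$, which is $\mathcal{O}(N)$ test operations.

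Next I would collect these factors into the bound
\begin{equation*}
S_{\ref{alg.tightcheck}}(N) \;\le\; c\, N^{2} \sum_{\xi=1}^{N-1} \binom{N-1}{\xi} \sum_{\eta=1}^{N-1-\xi} \binom{N-1-\xi}{\eta}\,\eta,
\end{equation*}
for some absolute constant $c$, where the leading $N^{2}$ absorbs both the outer loop over $s$ and the $\mathcal{O}(N)$ per-node edge check. The key observation I would use to evaluate the double sum is that the triple of disjoint subsets $(\mathcal{A}_\alpha,\mathcal{M}_\beta,\mathcal{L}_\beta\setminus\{s\})$ is an ordered partition of the $N-1$ nodes in $\mathcal{V}\setminus\{s\}$ into three labeled parts of sizes $(\xi,\eta,N-1-\xi-\eta)$. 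Hence the double sum equals
\begin{equation*}
\sum_{\xi+\eta+\ell=N-1}\binom{N-1}{\xi,\eta,\ell}\,\eta,
\end{equation*}
which by the multinomial theorem is $\left.\tfrac{\partial}{\partial y}(x+y+z)^{N-1}\right|_{x=y=z=1}=(N-1)3^{N-2}$.

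Substituting back gives $S_{\ref{alg.tightcheck}}(N)\le c\,N^{2}(N-1)3^{N-2}$, which is in $\mathcal{O}(N^{3}3^{N})$ by Definition~\ref{df.bigo}. The main bookkeeping obstacle is making sure I account correctly for the per-iteration cost: the loops over $\mathcal{A}_\alpha$ and $\mathcal{M}_\beta$ each contain trivial termination tests and set constructions that I should argue are dominated by the $\mathcal{O}(N)$ edge inspection inside the innermost loop, so that the bound above is not off by more than a constant factor; I also need to verify that excluding the $\xi=0$ and $\eta=0$ terms (since the algorithm skips empty subsets) only decreases the sum, so it does not affect the $\mathcal{O}(\cdot)$ conclusion. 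Finally, a side remark I would include is that compared with the $\mathcal{O}(N^{2}2^{N})$ bound from Theorem~\ref{th.coxrob}, checking the $f$-resiliency condition is asymptotically worse by a factor of roughly $N(3/2)^{N}$, which is the quantitative comparison the paper uses to justify strong robustness as the preferred sufficient condition.
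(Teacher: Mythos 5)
Your proposal is correct and reaches the stated bound, but it evaluates the central sum by a genuinely different route than the paper. Both arguments share the same skeleton: a factor of $N$ for the source loop, a sum over nonempty $\mathcal{A}_\alpha$ of size $\xi$, a sum over nonempty $\mathcal{M}_\beta$ of size $\eta$, and an $\mathcal{O}(N)$ edge inspection per node of $\mathcal{M}_\beta$ (the paper counts this per-iteration cost exactly as $\eta(N-\eta)$, which you upper-bound by $\eta N$ --- harmless for an $\mathcal{O}(\cdot)$ claim). Where you diverge is the evaluation of $\sum_{\xi}\binom{N-1}{\xi}\sum_{\eta}\binom{N-1-\xi}{\eta}\,\eta$: the paper proceeds in two stages, first collapsing the inner sum with the weighted binomial identities $\sum_\eta \eta\binom{M}{\eta}=M2^{M-1}$ and $\sum_\eta \eta^2\binom{M}{\eta}=M(M+1)2^{M-2}$ to get $S_{\ref{alg.tightcheck}}^{\star}=2^{N-\xi-1}(N-\xi-1)(3N-\xi)$, and then summing over $\xi$ to obtain the exact closed form \eqref{eq.tigtcompnew} whose leading term is $16\cdot 3^{N-3}N^3$; you instead recognize $\binom{N-1}{\xi}\binom{N-1-\xi}{\eta}$ as the multinomial coefficient of an ordered tripartition of $\mathcal{V}\setminus\{s\}$ and read off $\sum\binom{N-1}{\xi,\eta,\ell}\eta=\left.\tfrac{\partial}{\partial y}(x+y+z)^{N-1}\right|_{x=y=z=1}=(N-1)3^{N-2}$ in one step. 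Your identity is correct (the excluded $\xi=0$ and $\eta=0$ terms only decrease the sum, and the $\eta=0$ terms vanish anyway), so $S_{\ref{alg.tightcheck}}(N)\le cN^2(N-1)3^{N-2}\in\mathcal{O}(N^3 3^N)$ follows. What each approach buys: the paper's exact closed form exhibits all lower-order terms and makes the constant in front of $N^3 3^N$ explicit, which is useful if one wants a tight numerical comparison with Algorithm~\ref{alg.robcheck}; your multinomial argument is shorter, makes the origin of the $3^N$ factor conceptually transparent (three labeled parts $\mathcal{A}$, $\mathcal{M}$, $\mathcal{L}\setminus\{s\}$), and generalizes immediately if further labeled parts were added. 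Your closing comparison factor $N(3/2)^N$ against Theorem~\ref{th.coxrob} is also the one the paper intends.
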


\begin{proof}
First, we break down Algorithm~\ref{alg.tightcheck} into four cascading steps that each requires $S_{\ref{alg.tightcheck},i}(N)$ number of basic operations, $i=1, ...4$. Then, the worst case complexity of Algorithm~\ref{alg.tightcheck} is calculated via total number of operations $S_{\ref{alg.tightcheck}}(N)$ which considers all the cascading steps together. 

\begin{enumerate}
\item (line 3-4) In Definition~\ref{df.LMA}, the source node $s$ is assumed to be known to all the nodes. However, to check the condition (iii) of the definition for a graph in general, node $s$ is unknown. Thus, steps (2) to (4) must be done for the $N$ nodes that means $S_{\ref{alg.tightcheck},1}(N)=N$.

\item (line 5-6) As we do not know the set $\mathcal{A}$, the condition (iii) of Definition~\ref{df.LMA} must be checked for every possible subset of adversarial nodes chosen from $\mathcal{V} \setminus \{i\}$ such that $1 \leq \vert \mathcal{A} \vert \leq N-1$. Note that $\mathcal{A}$ includes all the adversarial nodes of the network that can exceed $f$, however, its $f$-local condition is checked in step (4). This is equivalent to finding all nonempty subsets of nodes in $\mathcal{V} \setminus \{i\}$. Thus we have $S_{\ref{alg.tightcheck},2}(N) = \sum_{\xi=1}^{N-1} \binom{N-1}{\xi}$, where $\xi = \vert \mathcal{A} \vert$.

\item (line 7-8) In this step, considering the set $\mathcal{A}$ determined in step (2), we find all the possible partitions $\mathcal{M}, \mathcal{L}$ in $\mathcal{V} \setminus \mathcal{A}$. Note that $s \in \mathcal{L}$ is specified in step (1). This can be done by finding all nonempty subsets $\mathcal{M}_\beta \subseteq \mathcal{V} \setminus (\mathcal{A}_\alpha \cup \{s\})$ that imposes $S_{\ref{alg.tightcheck},3}(N) = \sum_{\eta=1}^{N-\xi -1} \binom{N-\xi -1}{\eta}$ operations, where $\eta = \vert \mathcal{M} \vert$ and $\xi \in \{1,\ldots, N-1 \}$ comes from step (2).

\item (line 9-14) Incoming edges of every node (in worst case) in $\mathcal{M}$, obtained in step (3), from the nodes in $\mathcal{V} \setminus \mathcal{M}$ have to be checked to ensure that they do not have more than $f$ incoming edges from $\mathcal{A}$ and to find at least a node which has $f+1$ incoming edges from $\mathcal{L}$ or is directly connected to $s \in \mathcal{L}$. To check each incoming edge of a node, we deal with a test, which is a basic binary operation. Thus, this step needs $S_{\ref{alg.tightcheck},4}(N) = \eta (N-\eta)$ tests, where $\eta$ is specified in step (3). 
\end{enumerate}

\begin{figure}[t]
	\vspace{0cm}
	\hspace{.5cm}
	\def \svgscale{.35}
	\fontsize{7}{10}\selectfont
	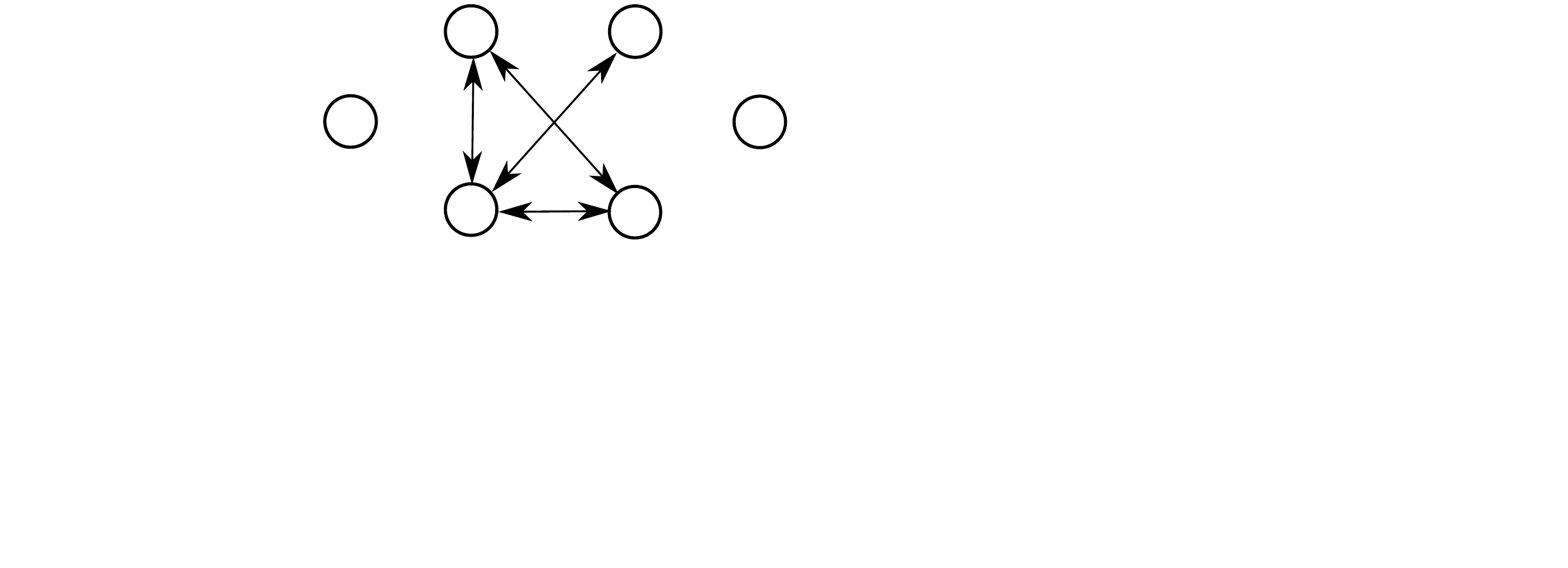
	\vspace{-.5cm}
    \caption{A variation of topology constraints on a graph with 7 nodes is illustrated. Digraph $\mathcal{D}_1$ is the most resilient constraint i.e. strongly $2$-robust; digraph $\mathcal{D}_2$ is 2-robust but it is not strongly 2-connected and digraph $\mathcal{D}_3$ is strongly $2$-connected while it is not $2$-robust.}
 \label{fig.robcon}
\end{figure}

Considering the aforementioned cascading steps, the total number of required basic operations to perform Algorithm~\ref{alg.tightcheck} is represented by
\begin{align} \label{eq.tigtcomp}
S_{\ref{alg.tightcheck}}(&N) = \nonumber \\ 
 & N \sum_{\xi=1}^{N-1} \left( \binom{N-1}{\xi}  \underbrace{\sum_{\eta=1}^{N-\xi -1} \binom{N-\xi -1}{\eta} \eta (N-\eta)}_{S_{\ref{alg.tightcheck}}^{\star}} \right).
\end{align}
Using Binomial Theorem\footnote{According to Binomial Theorem, we have $\sum_{\xi=0}^{N} \xi \binom{N}{\xi} = 2^{(N-1)}N$ and $\sum_{\xi=0}^{N} \xi^2 \binom{N}{\xi} = 2^{(N-2)}N(N+1)$, see \cite{boros2004irresistible}.}, $S_{\ref{alg.tightcheck}}^{\star}$ is simplified as $S_{\ref{alg.tightcheck}}^{\star} = 2^{(N-\xi -1)}(N-\xi -1)(3N-\xi)$ and \eqref{eq.tigtcomp} is reformed as
\begin{align} \label{eq.tigtcompnew}
S_{\ref{alg.tightcheck}}(N) =& 16 \times 3^{(N - 3)}N^3 - 3\times 2^{(N - 1)} N^3 \nonumber \\
&- 4\times 3^{(N - 2)} N^2 + 3\times 2^{(N - 1)} N^2 \nonumber \\
&- 4\times 3^{(N - 3)}N.
\end{align}

Finally, neglecting the lower order terms, the worst-case complexity of Algorithm~\ref{alg.tightcheck} is $\mathcal{O}(N^{3} 3^N)$.
\end{proof}

In large scale networks, time complexity of an algorithm really matters. The results of Theorems~\ref{th.coxrob} and \ref{th.coxtight} justify our selection of strong robustness as the required condition for our presented averaging consensus algorithms.
\subsection{Robustness vs. Connectivity} \label{sec. grrob}
In this section, we link our design to the previous existing literature of (resilient) consensus and averaging. In particular, Definitions~\ref{df.rob} and \ref{df.Srobw} have been used for resilient (non-average) consensus \cite{dibaji2018TAC} and resilient distributed estimation \cite{mitra2018secure}, respectively. Strong connectivity described in Definition~\ref{df.con}, on the other hand, is a topological condition for average consensus problems \cite{cai2012average}. Connections between these metrics are investigated as follows.

First, we present some properties of strongly $r$-robust graphs to deliver a better understanding of the robustness notions on graph topologies. We have to note that the strong robustness definition in \cite{zhang2012robustness} does not include $r \leq \lceil \frac{N}{2} \rceil$. To address this discrepency, consider the complete graph $\mathcal{K}_N$, which has the maximum degree of strong robustness, i.e. $r$. However, as every subset of $\mathcal{K}_N$ has some links from the nodes outside the subset, $r$ must be upper-bounded. For the sake of consistency with $r$-robust graphs, where $\mathcal{K}_N$ is $\lceil \frac{N}{2} \rceil$-robust and $r \leq \lceil \frac{N}{2} \rceil$ \cite{dibaji2018TAC}, the same bound for $r$ in strongly robust graphs is imposed. We also state how a strongly $r$-robust digraph is related to a strongly $r$-connected digraph.

\begin{proposition} \rm \label{prop.propsrob}
If the graph $\mathcal{D}$ with $N$ nodes is strongly $r$-robust, the following properties hold:
\begin{itemize}
\item[i)]  $\mathcal{D}$ is strongly $r$-connected. 
\item[ii)]  $\mathcal{D}$ is $r$-robust.
\item[iii)] $d_{\text{in}}(i) \geq r$ for all nodes $i \in \mathcal{V}$.
\item[iv)] For all $1 \leq r^\prime \leq r$, $\mathcal{D}$ is strongly $r^\prime$-robust.
\item[v)] Removal of $k$ incoming edges from each node of $\mathcal{D}$ makes it strongly $(r-k)$-robust.
\end{itemize}
\end{proposition}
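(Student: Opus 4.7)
My plan is to prove the five properties in order of increasing difficulty, using the dichotomy in Definition~\ref{df.Srob} as the main tool: every nonempty $\mathcal{S}\subseteq\mathcal{V}$ either contains a node $i$ with $|\mathcal{N}_i^-\setminus\mathcal{S}|\geq r$ (the $r$-reachability alternative) or a node $i$ with $\mathcal{V}\setminus\mathcal{S}\subseteq\mathcal{N}_i^-$ (the neighborhood alternative). Properties (iii) and (iv) then fall out immediately. For (iii), apply the dichotomy to the singleton $\{i\}$: either $|\mathcal{N}_i^-|\geq r$ directly, or $\mathcal{V}\setminus\{i\}\subseteq\mathcal{N}_i^-$, giving $d_{\text{in}}(i)=N-1\geq r$ (using $r\leq\lceil N/2\rceil\leq N-1$ for $N\geq 2$). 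For (iv), both alternatives are monotone in $r$: $r$-reachability implies $r'$-reachability for $r'\leq r$, and the neighborhood alternative does not involve $r$, so strong $r'$-robustness is inherited.

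For (ii), I argue by contradiction. Let $\mathcal{S}_1,\mathcal{S}_2$ be disjoint nonempty subsets of $\mathcal{V}$, neither $r$-reachable. The dichotomy yields $i_t\in\mathcal{S}_t$ with $\mathcal{V}\setminus\mathcal{S}_t\subseteq\mathcal{N}_{i_t}^-$, so $|\mathcal{N}_{i_t}^-\setminus\mathcal{S}_t|\geq N-|\mathcal{S}_t|$; failure of $r$-reachability forces $N-|\mathcal{S}_t|<r$, i.e.\ $|\mathcal{S}_t|\geq N-r+1$. Summing and using disjointness gives $N\geq|\mathcal{S}_1|+|\mathcal{S}_2|\geq 2(N-r+1)$, hence $r\geq (N+2)/2$, contradicting $r\leq\lceil N/2\rceil$.

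Property (i) is where I expect the main obstacle, and is the one requiring a genuinely new construction. I would derive strong $r$-connectivity via Definition~\ref{df.con} by contradiction: assume some $\mathcal{T}$ with $|\mathcal{T}|\leq r-1$ renders $\mathcal{D}\setminus\mathcal{T}$ non-strong, so there exist $i,j\in\mathcal{V}\setminus\mathcal{T}$ with no $i$-to-$j$ path in $\mathcal{D}\setminus\mathcal{T}$. Let $\mathcal{S}$ be the set of nodes that can reach $j$ in $\mathcal{D}\setminus\mathcal{T}$; then $j\in\mathcal{S}$, $i\notin\mathcal{S}$, and any edge in $\mathcal{D}$ from outside $\mathcal{S}\cup\mathcal{T}$ into $\mathcal{S}$ would survive in $\mathcal{D}\setminus\mathcal{T}$ and pull its tail into $\mathcal{S}$, a contradiction. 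Hence for every $u\in\mathcal{S}$, $\mathcal{N}_u^-\setminus\mathcal{S}\subseteq\mathcal{T}$, so $|\mathcal{N}_u^-\setminus\mathcal{S}|\leq r-1$, which kills the $r$-reachability alternative for $\mathcal{S}$. The dichotomy then delivers some $u\in\mathcal{S}$ with $\mathcal{V}\setminus\mathcal{S}\subseteq\mathcal{N}_u^-\subseteq\mathcal{S}\cup\mathcal{T}$, forcing $\mathcal{V}\setminus\mathcal{S}\subseteq\mathcal{T}$ and therefore $\mathcal{V}\setminus\mathcal{T}\subseteq\mathcal{S}$, which places $i\in\mathcal{S}$, the desired contradiction. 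The crux is choosing $\mathcal{S}$ as the backward reachability set to $j$ rather than the more tempting forward reachability set from $i$, so that the ``in-cut'' through $\mathcal{T}$ is precisely what gets bounded by $r-1$.

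Finally, (v) uses the dichotomy on the graph $\mathcal{D}'$ obtained from $\mathcal{D}$ by deleting $k$ incoming edges at each node. If $\mathcal{S}$ is $r$-reachable in $\mathcal{D}$ via some witness $i$, then $i$ retains at least $r-k$ incoming neighbors outside $\mathcal{S}$ in $\mathcal{D}'$, so $\mathcal{S}$ is $(r-k)$-reachable there. Otherwise the neighborhood alternative holds in $\mathcal{D}$ for some $i\in\mathcal{S}$, in which case $i$ retains at least $|\mathcal{V}\setminus\mathcal{S}|-k$ incoming neighbors in $\mathcal{V}\setminus\mathcal{S}$ in $\mathcal{D}'$; this gives $(r-k)$-reachability when $|\mathcal{V}\setminus\mathcal{S}|\geq r$, and the residual case $|\mathcal{V}\setminus\mathcal{S}|<r$ is handled by recovering the neighborhood alternative in $\mathcal{D}'$ for a suitably chosen witness.
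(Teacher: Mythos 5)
Parts (i)--(iv) of your argument are correct. For (ii) you reorganize the paper's counting observation (one of two disjoint nonempty sets has at most $\lfloor N/2\rfloor$ nodes, so its complement has at least $\lceil N/2\rceil \geq r$ nodes and the neighborhood alternative already implies $r$-reachability) into a proof by contradiction; the content is the same. For (i) you take a genuinely different route: the paper removes a cut set $\mathcal{M}$ and reasons, rather informally, about the strongly connected components of $\mathcal{D}\setminus\mathcal{M}$ through nested contradictions, whereas you apply the robustness dichotomy directly to the backward-reachability set $\mathcal{S}$ of the unreachable target $j$, note that its in-cut lies inside $\mathcal{T}$ so $\mathcal{S}$ cannot be $r$-reachable, and then extract $\mathcal{V}\setminus\mathcal{S}\subseteq\mathcal{T}$ from the neighborhood alternative. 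Your version is more self-contained and easier to verify than the paper's. Parts (iii) and (iv) are omitted by the paper as easy, and your one-line arguments for them are fine.

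Part (v) is where the proposal breaks down, and the break is not cosmetic. You correctly isolate the residual case $\vert\mathcal{V}\setminus\mathcal{S}\vert<r$, in which the neighborhood alternative rather than $r$-reachability certifies $\mathcal{S}$ in $\mathcal{D}$, but the promise that this case ``is handled by recovering the neighborhood alternative in $\mathcal{D}'$ for a suitably chosen witness'' cannot be kept: the condition $\mathcal{V}\setminus\mathcal{S}\subseteq\mathcal{N}_i^-$ is all-or-nothing and does not degrade gracefully under edge deletion. Concretely, take $\mathcal{D}=\mathcal{K}_4$ (strongly $2$-robust), $k=1$, and $\mathcal{S}=\{1,2,3\}$ with $\mathcal{V}\setminus\mathcal{S}=\{4\}$; if nodes $1,2,3$ each delete their incoming edge from node $4$ (and node $4$ deletes any one incoming edge), then in $\mathcal{D}'$ no node of $\mathcal{S}$ has node $4$ as an in-neighbor, so $\mathcal{S}$ is neither $1$-reachable nor does any $i\in\mathcal{S}$ satisfy the neighborhood alternative, and $\mathcal{D}'$ is not strongly $1$-robust. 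The residual case is thus exactly where property (v) as stated fails: the analogous degradation lemma holds for plain $r$-robustness, where $r$-reachability is the only alternative, but not for strong robustness under Definition~\ref{df.Srob} without further hypotheses (e.g., restricting to sets with $\vert\mathcal{V}\setminus\mathcal{S}\vert\geq r$). The paper omits the proof of (v) as easy, so your attempt does not conflict with a written argument, but the gap you left open is a real one.
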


\begin{proof}
i) We use contradiction to prove this. Suppose that the digraph $\mathcal{D}$ is not strongly $r$-connected but strongly $r$-robust; it means that there exists at least one node cut with at most $r-1$ nodes whose removal changes the digraph $\mathcal{D}$ from strongly $r$-connected, which is in the category $C_3$, to an $r$-connected graph in the category $C_0$, $C_1$, or $C_2$. Let $\mathcal{M}$ be the set of nodes removed in this node cut. Hence, $\mathcal{D} \setminus \mathcal{M}$ has at least $m$ strongly $1$-connected components where $m$ is a fixed number and $m \geq 2$.  Denote the components by $\mathcal{D}_1=(\mathcal{V}_1,\mathcal{E}_1), \mathcal{D}_2=(\mathcal{V}_2,\mathcal{E}_2), \ldots, \mathcal{D}_m=(\mathcal{V}_m,\mathcal{E}_m)$, where $\mathcal{D} \setminus \mathcal{M}$ cannot be in the category of $C_3$. Then, we prove that, in fact, $\mathcal{D} \setminus \mathcal{M}$ belongs to $C_3$. This part of the proof is also done with a contradiction. According to Def.~\ref{df.Srob}, $\mathcal{V}_1 \subset \mathcal{V}$ is $r$-reachable or contains a node which has incoming edges from every node outside of $\mathcal{D}_1$. However, as $\mathcal{M}$ has at most $r-1$ nodes, with another proof by contradiction, $\mathcal{D}_1$ has at least one incoming edge from the nodes in $\mathcal{D}_1^\prime = \bigcup_2^m \mathcal{D}_m$, $m \geq 2$. Likewise, $\mathcal{D}_1^\prime$ has one incoming edge from $\mathcal{D}_1$. This means that $\mathcal{D} \setminus \mathcal{M}$ does not have $m$ distinct strongly connected subgraphs. Hence, we have at most $m-1$ strongly connected components. By repeating this process and mathematical induction, we see that $m$ cannot be more than 1 which is a contradiction to the existence of such an $\mathcal{M}$. In fact, $\mathcal{D}\setminus \mathcal{M}$ is at least strongly $1$-connected, i.e. belongs to $C_3$.

ii) Assume that $\mathcal{D}$ is strongly $r$-robust and $\mathcal{S}$ is an arbitrary subset of $\mathcal{V}$. Then $ \forall \mathcal{S} \subset \mathcal{V}$, $\mathcal{S} \neq \emptyset$, $\exists i \in \mathcal{S}$, such that whether $\vert \mathcal{N}_i^- \setminus \mathcal{S} \vert \geq r$ or $\mathcal{N}_i^- \setminus \mathcal{S} = \mathcal{V} \setminus \mathcal{S}$. We have to prove that $\forall \mathcal{S}_1,\mathcal{S}_2 \neq \emptyset$, $\mathcal{S}_1 \cap \mathcal{S}_2 \neq \emptyset$, whether $\mathcal{S}_1$ is $r$-reachable or $\mathcal{S}_2$ is $r$-reachable. First, assume that $\vert \mathcal{S}_1 \vert \leq \lfloor \frac{N}{2} \rfloor$. Then, $\exists i \in \mathcal{S}_1$, where $\vert \mathcal{N}_i^- \setminus \mathcal{S}_1 \vert \geq r$ or $\mathcal{N}_i^- \setminus \mathcal{S}_1  = \mathcal{V} \setminus \mathcal{S}_1$. However, note that $\vert \mathcal{V} \setminus \mathcal{S}_1 \vert \geq \lfloor \frac{N}{2} \rfloor$ while $r \leq \lfloor \frac{N}{2} \rfloor$. Thus, $\vert \mathcal{N}_i^- \setminus \mathcal{S}_1 \vert \geq r$ which means that $\mathcal{S}_1$ is $r$-reachable. Now, consider the case that $\vert \mathcal{S}_1 \vert \geq \lceil \frac{N}{2} \rceil$. Then, there is a set $\mathcal{S}_2 \subseteq \mathcal{V} \setminus \mathcal{S}_1$, where $\vert \mathcal{S}_2 \vert \leq \vert \mathcal{V} \setminus \mathcal{S}_1 \vert \leq \lfloor \frac{N}{2} \rfloor$. With a similar argument to the previous case, we conclude that $\mathcal{S}_2$ is $r$-reachable. Therefore, in both cases one of the two partition sets is $r$-reachable.

The remaining properties iii)-v) can be shown easily and their proofs are omitted. 
\end{proof}

 Note that we neither did not make use of Def.~\ref{df.Srobw} \cite{mitra2018secure} in this paper but some of the properties in Proposition \ref{prop.propsrob} can be directly applied to them as each graph which is strongly $r$-robust with respect to each of its nodes is in fact strongly $r$-robust. Furthermore, the relation between an $r$-robust graph and a strongly $r$-connected graph is still an open problem to investigate. In fact, there are some graphs which are $r$-robust but are not strongly $r$-connected and vice versa. Three examples for the case $r=2$ are illustrated in Fig.~\ref{fig.robcon}. Digraph $\mathcal{D}_1$ is strongly 2-robust. Digraph $\mathcal{D}_2$ is 2-robust but it is not strongly 2-connected. In $\mathcal{D}_2$, consider two subsets of nodes $\mathcal{S}_1 = \{1,2,3\}$ and $\mathcal{S}_2=\{4,5,6\}$. There exists paths only from nodes in $\mathcal{S}_1$ to nodes in $\mathcal{S}_2$ but the inverse is not true. Therefore, the digraph cannot be strongly 2-connected while it is 2-robust since $\mathcal{S}_1$ is 2-reachable. However, in digraph $\mathcal{D}_3$ which is a strongly 2-connected digraph (circular graphs are strongly 2-connected), none of the two subsets $\mathcal{S}_1$ and $\mathcal{S}_2$ are 2-reachable, thus it is not 2-robust.

We also infer the following corollary as another bridge between connectivity and robustness.

\begin{corollary} \rm \label{cor.1}
A digraph $\mathcal{D}$ is strongly 1-connected if and only if it is strongly 1-robust. 
\end{corollary}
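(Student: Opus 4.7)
The plan is to handle the two directions separately, with the forward implication being essentially a corollary of a result already in hand, and the reverse implication requiring only a short path-crossing argument.

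For the forward direction (strongly 1-robust $\Rightarrow$ strongly 1-connected), I would simply invoke Proposition~\ref{prop.propsrob}(i) with $r = 1$. That result already establishes that strong $r$-robustness implies strong $r$-connectivity for every admissible $r$, so specializing to $r=1$ closes this half immediately without any additional work.

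For the reverse direction (strongly 1-connected $\Rightarrow$ strongly 1-robust), I would argue directly from Definition~\ref{df.Srob}. Let $\mathcal{S} \subseteq \mathcal{V}$ be any nonempty subset. If $\mathcal{S} = \mathcal{V}$, then $\mathcal{V} \setminus \mathcal{S} = \emptyset$ and the containment $\mathcal{V} \setminus \mathcal{S} \subseteq \mathcal{N}_i^-$ holds trivially for any $i \in \mathcal{S}$, so the second disjunct of the definition is satisfied. Otherwise $\mathcal{V} \setminus \mathcal{S}$ is nonempty; pick any $i \in \mathcal{S}$ and any $j \in \mathcal{V} \setminus \mathcal{S}$. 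By strong connectivity, there is a directed path from $j$ to $i$, and since this path starts outside $\mathcal{S}$ and ends inside $\mathcal{S}$, it must contain a first edge $(u,v)$ with $u \notin \mathcal{S}$ and $v \in \mathcal{S}$. This edge witnesses $|\mathcal{N}_v^- \setminus \mathcal{S}| \geq 1$, so $\mathcal{S}$ is $1$-reachable in the sense of Definition~\ref{df.reach}. Thus the first disjunct of Definition~\ref{df.Srob} holds, completing this direction.

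Since both directions are essentially one-line observations (one citing Proposition~\ref{prop.propsrob}(i), the other exploiting that any path from outside $\mathcal{S}$ to inside $\mathcal{S}$ must have a crossing edge), I do not anticipate any real obstacle; the only care needed is the boundary case $\mathcal{S} = \mathcal{V}$, which is dispatched by the second clause of the strong robustness definition rather than by the reachability clause. Everything else is definitional.
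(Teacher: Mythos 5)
Your argument is correct, and it matches the route the paper intends: the corollary is stated without an explicit proof (it is ``inferred'' from the preceding discussion), with the forward direction being exactly Proposition~\ref{prop.propsrob}(i) specialized to $r=1$ and the reverse direction being the standard crossing-edge observation you give. Your explicit handling of the boundary case $\mathcal{S}=\mathcal{V}$ via the second disjunct of Definition~\ref{df.Srob} is a detail the paper glosses over, and it is handled correctly.
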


As a result, the following necessary and sufficient condition holds on a network with no adversarial node to achieve average consensus, which is well known in the literature \cite{cai2012average}.

\begin{corollary} \rm \label{cor.2}
Network $\mathcal{D}$ with no adversarial nodes using (A)SABA and update rules (2) or (3), achieves average consensus if and only if it is strongly connected.
\end{corollary}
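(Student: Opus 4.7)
The plan is to observe that this corollary is essentially the specialization $f=0$ of the results already developed in the paper, combined with the equivalence delivered by Corollary~\ref{cor.1}. So the proof will be short and mainly a matter of citing the right theorems with the right substitutions, rather than new argumentation.

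For sufficiency, I would start from the hypothesis that $\mathcal{D}$ is strongly connected, i.e.\ belongs to $C_3$ in the sense of Definition~\ref{df.con}, which is the same as being strongly $1$-connected. By Corollary~\ref{cor.1}, strong $1$-connectedness is equivalent to strong $1$-robustness. Since with no adversarial nodes we have $f=0$, the threshold $2f+1$ equals $1$, so the hypothesis of Theorem~\ref{th.sum} (in the synchronous case) and of Proposition~\ref{propo.asynave} (in the asynchronous case) is satisfied. Applying whichever of the two is relevant, each regular node, i.e.\ each node of the network, executes (A)SABA successfully and converges via (\ref{eq.upd}) or (\ref{eq.asupd}) to $x_a=\sum_{i\in\mathcal{V}}x_i[0]/N$, which is exactly average consensus in the sense of Definition~\ref{df.ave}.

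For necessity, I would invoke Theorem~\ref{th.necsuf} applied with $f=0$: if the network reaches average consensus by executing SABA under a $0$-total adversarial model (which trivially models the absence of adversaries), then $\mathcal{D}$ is strongly $(2\cdot 0+1)$-connected, i.e.\ strongly $1$-connected, i.e.\ strongly connected. The asynchronous case follows by the same route since, as noted after the statement of Theorem~\ref{th.necsuf}, its proof uses only topological arguments and is therefore insensitive to the update schedule.

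I do not expect any real obstacle: the only thing to be careful about is consistency of nomenclature, since the paper uses both ``strongly connected'' (Definition~\ref{df.con}) and ``strongly $r$-connected'' (with the $\kappa_3$ machinery). A single sentence identifying ``strongly connected'' with ``strongly $1$-connected'' takes care of this, after which the corollary is just the $f=0$ instance of Theorem~\ref{th.sum}/Proposition~\ref{propo.asynave} and Theorem~\ref{th.necsuf}, glued by Corollary~\ref{cor.1}.
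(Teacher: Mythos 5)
Your proof is correct and follows essentially the same route as the paper: sufficiency is obtained exactly as in the paper by combining Corollary~\ref{cor.1} with Theorem~\ref{th.sum} (and Proposition~\ref{propo.asynave}) at $f=0$. For necessity the paper argues directly that without mutual reachability some node cannot retrieve all initial values, while you instead instantiate Theorem~\ref{th.necsuf} at $f=0$; this is the same underlying argument packaged as a citation, and is arguably the cleaner of the two.
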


\begin{proof}
By Corollary~\ref{cor.1} and Theorem \ref{th.sum}, it immediately follows that strong connectivity is the sufficient condition for the network $\mathcal{D}$ to achieve average consensus with no avdersarial nodes, or $f=0$. Also, if network $\mathcal{D}$ has no adversarial nodes, at least, every two nodes of $\mathcal{D}$ has to be mutually reachable (or $\mathcal{D}$ has to be strongly connected), thus each node can retrieve all the initial values and calculate the average state value.
\end{proof}


\begin{figure*}[t]
\minipage{0.247\textwidth}
	\vspace{0cm}
	\def \svgscale{.33}
	\fontsize{7}{10}\selectfont
	\hspace{0cm}
\begingroup%
  \makeatletter%
  \providecommand\color[2][]{%
    \errmessage{(Inkscape) Color is used for the text in Inkscape, but the package 'color.sty' is not loaded}%
    \renewcommand\color[2][]{}%
  }%
  \providecommand\transparent[1]{%
    \errmessage{(Inkscape) Transparency is used (non-zero) for the text in Inkscape, but the package 'transparent.sty' is not loaded}%
    \renewcommand\transparent[1]{}%
  }%
  \providecommand\rotatebox[2]{#2}%
  \newcommand*\fsize{\dimexpr\f@size pt\relax}%
  \newcommand*\lineheight[1]{\fontsize{\fsize}{#1\fsize}\selectfont}%
  \ifx\svgwidth\undefined%
    \setlength{\unitlength}{409.80759989bp}%
    \ifx\svgscale\undefined%
      \relax%
    \else%
      \setlength{\unitlength}{\unitlength * \real{\svgscale}}%
    \fi%
  \else%
    \setlength{\unitlength}{\svgwidth}%
  \fi%
  \global\let\svgwidth\undefined%
  \global\let\svgscale\undefined%
  \makeatother%
  \begin{picture}(1,0.43053273)%
    \lineheight{1}%
    \setlength\tabcolsep{0pt}%
    \put(0,0){\includegraphics[width=\unitlength,page=1]{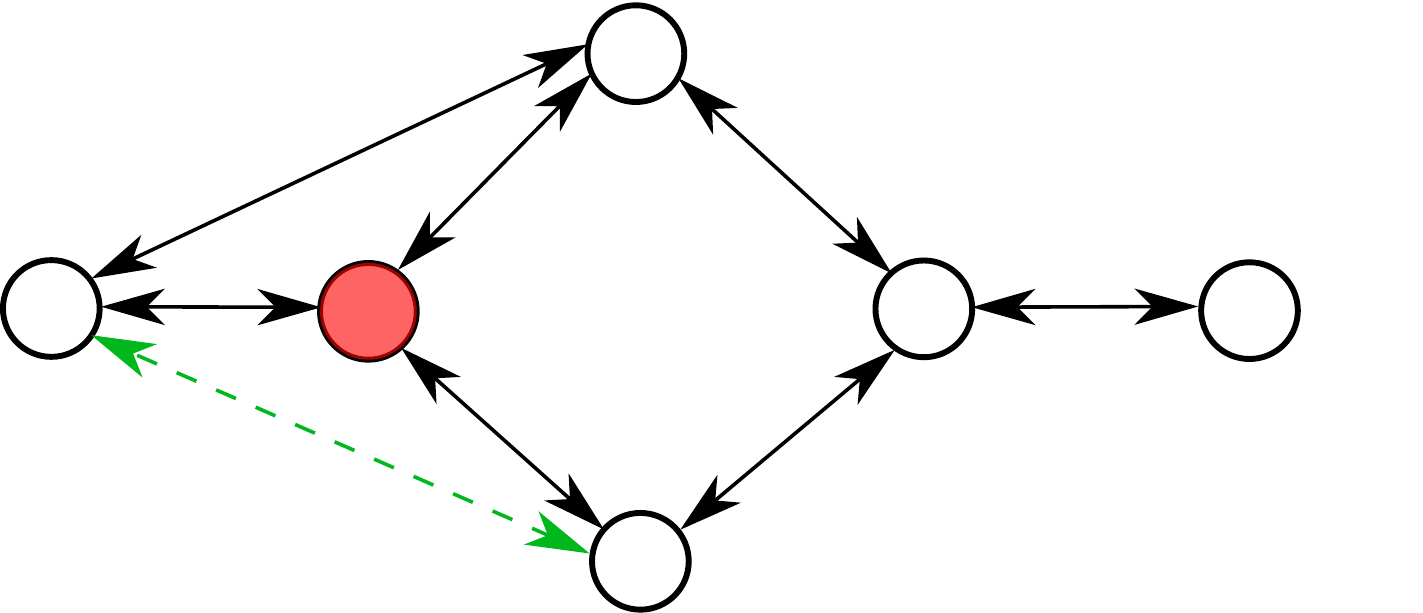}}%
    \put(0.4347139,0.37419916){\color[rgb]{0,0,0}\makebox(0,0)[lt]{\lineheight{1.25}\smash{\begin{tabular}[t]{l}$1$\end{tabular}}}}%
    \put(0.63259986,0.19860911){\color[rgb]{0,0,0}\makebox(0,0)[lt]{\lineheight{1.25}\smash{\begin{tabular}[t]{l}$2$\end{tabular}}}}%
    \put(0.43471395,0.01909864){\color[rgb]{0,0,0}\makebox(0,0)[lt]{\lineheight{1.25}\smash{\begin{tabular}[t]{l}$3$\end{tabular}}}}%
    \put(0.2427827,0.19786575){\color[rgb]{0,0,0}\makebox(0,0)[lt]{\lineheight{1.25}\smash{\begin{tabular}[t]{l}$4$\end{tabular}}}}%
    \put(0.02333472,0.19664892){\color[rgb]{0,0,0}\makebox(0,0)[lt]{\lineheight{1.25}\smash{\begin{tabular}[t]{l}$5$\end{tabular}}}}%
    \put(0.86531413,0.19476299){\color[rgb]{0,0,0}\makebox(0,0)[lt]{\lineheight{1.25}\smash{\begin{tabular}[t]{l}$6$\end{tabular}}}}%
    \put(0,0){\includegraphics[width=\unitlength,page=2]{Fig2.pdf}}%
  \end{picture}%
\endgroup%

	\vspace{0cm}
    \subcaption{} \label{fig2} 
\endminipage\hfill
\hspace{0mm}
\minipage{0.247\textwidth}
	\includegraphics[scale=0.34]{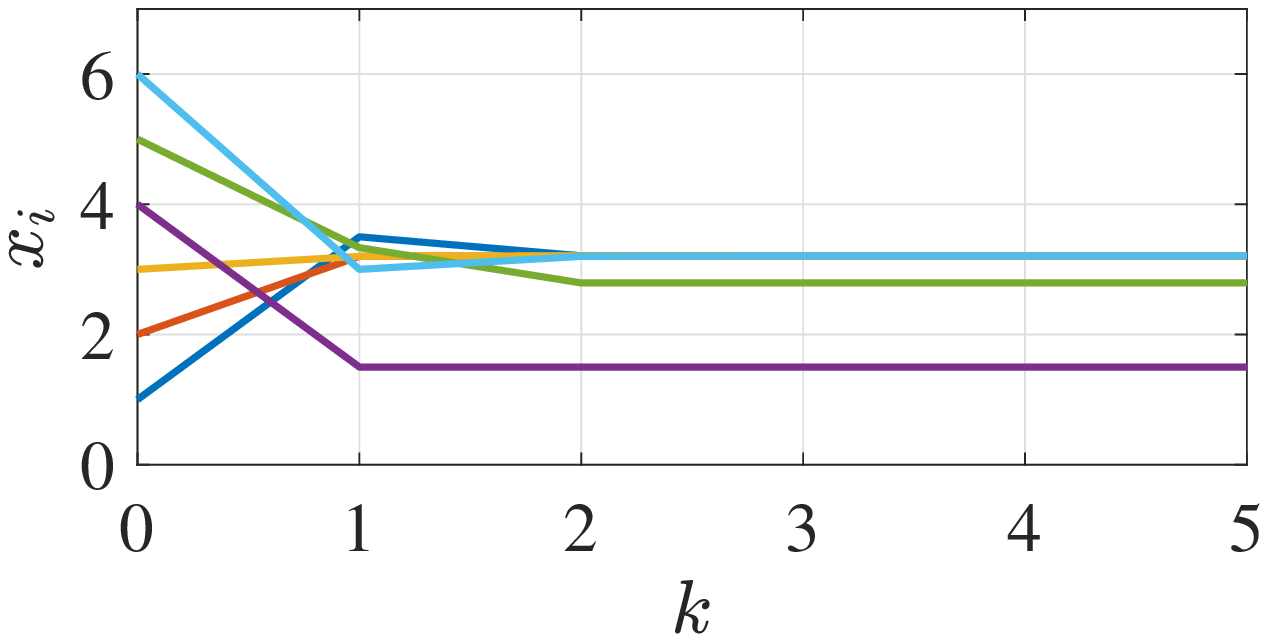}
	\subcaption{}\label{fig.SABAF}
\endminipage\hfill
\minipage{0.247\textwidth}%
	\includegraphics[scale=0.34]{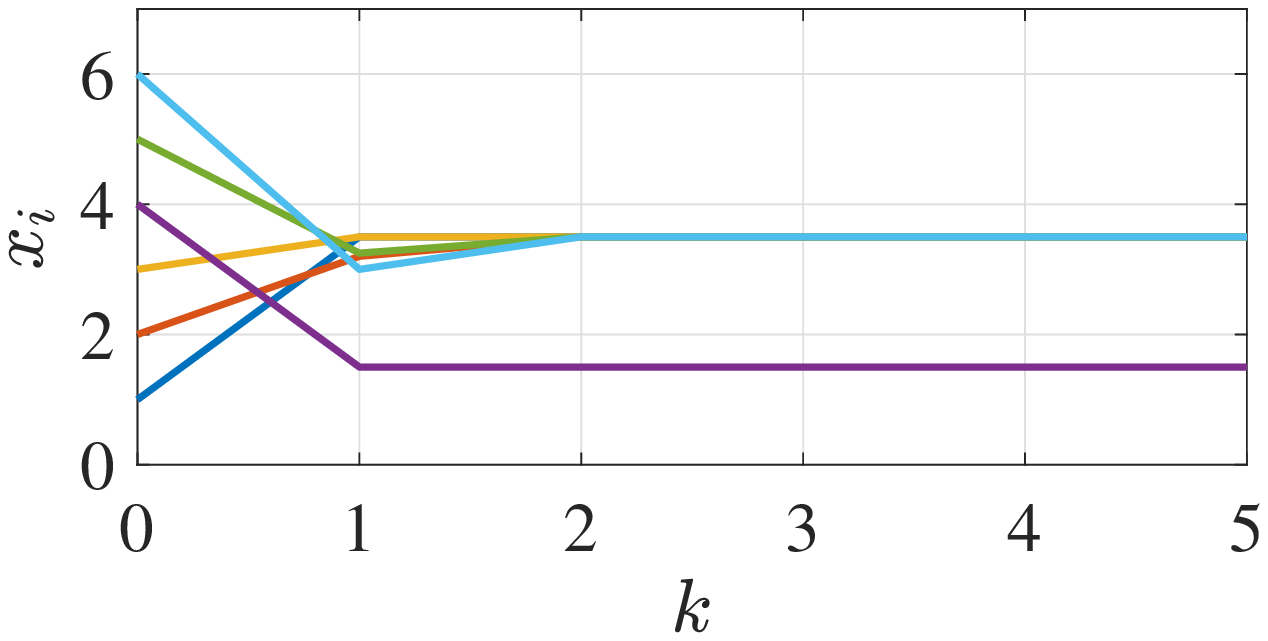}
	\subcaption{} \label{fig.SABAW}
\endminipage\hfill
\minipage{0.247\textwidth}%
	\includegraphics[scale=0.34]{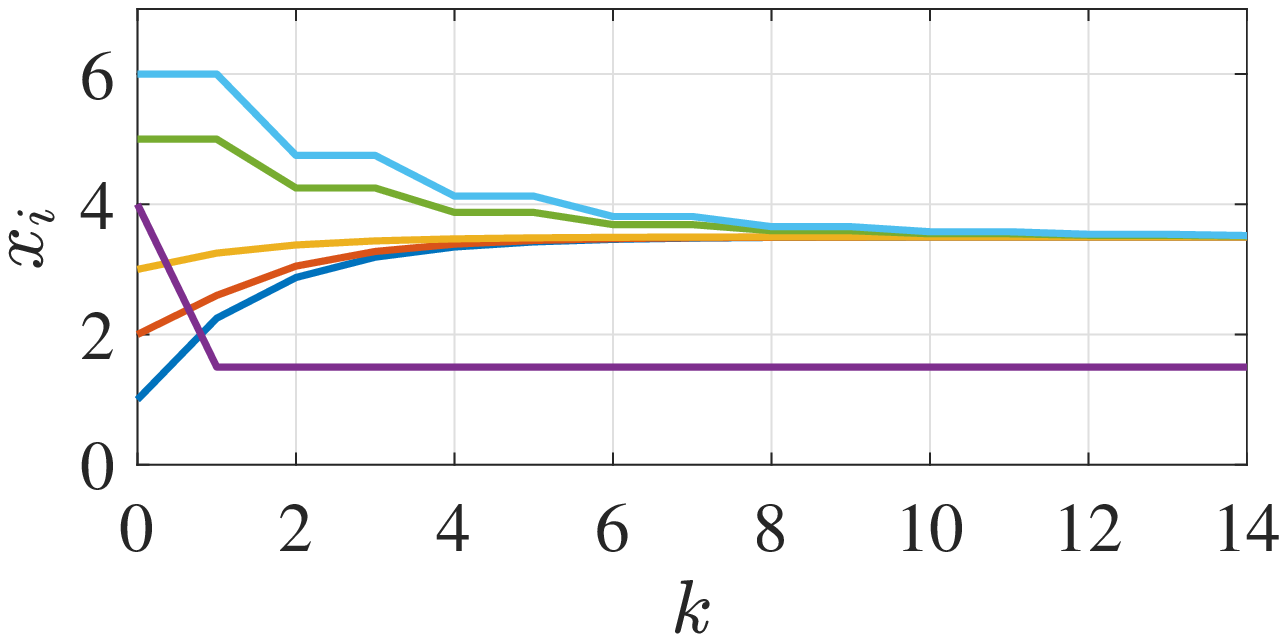}
	\subcaption{}\label{fig.SABAW_A}
\endminipage

\caption{A strongly 3-robust graph with 6 nodes. Node 4 is the adversarial node (a). If we remove the edge between the nodes 3 and 5, the network is not strongly ($2f+1$)-robust anymore and fails to reach the average consensus by perfoming SABA/ASABA (b). Otherwise, the network reaches average consensus using SABA (c) and ASABA (d).}\label{fig.SIMMM}
\end{figure*} 
\section{Simulation Example}\label{Sect: SimulationExample}
In this section, we provide a simulation example to show the effectiveness of the (A)SABA in the presence of Byzantine adversaries. 

Consider the six-node network illustrated in Fig.~\ref{fig2}, which enjoys strong ($2f+1$)-robustness with $f=1$. We assign the safe initial state values as $x_i[0]=i, \ i=1,2, \ldots, 6$, which turn into the average value $x_a=3.5$. Node 4 is chosen as the adversarial node. It begins and continues broadcasting false values $m_n^4[k]=1.5$, $n \in \{ 1,2,\ldots,\bar{N} \}$ of its neighbours for $k \geq 1$. Every other node since $k=1$ performs the SABA, updates its state value, and compute the average with the filter gain $\epsilon_i = 0$). The network loses its robustness if we remove the edge between nodes 3 and 5, leading to failure of the averaging algorithm and this can be seen in Fig.~\ref{fig.SABAF}. 
On the contrary, if the network of Fig.~\ref{fig2} includes the edge between the node 3 and 5, it is strongly ($2f+1$)-robust and achieves the average consensus as illustrated in Fig.~\ref{fig.SABAW}, where the regular nodes achieve $x_i[\bar{K}]=x_a, \ i=1,2,3,5,6$.

Furthermore, Fig.~\ref{fig.SABAW_A} represents the states of the network, where nodes 5 and 6 execute ASABA at $k=0,2,4,\ldots$ and all the nodes update using the asynchronous update rule (\ref{eq.asupd}). We set the filter gain as $\epsilon_i=0.5$ and the smoothing effect of the filters on the state variables is noticeable.

\section{Conclusion}\label{Sec: Conclusion}

In this paper, we have presented fully distributed algorithms for the problem of average consensus in the presence of Byzantine agents. We have shown that the convergence of presented algorithms on synchronous and asynchronous networks in the presence $f$-local adversaries relies on a connectivity measure of graphs, which is called strong robustness and is more restrictive than the conventional node connectivities. We also justified that strong robustness is the most proper topology constraint for the resilient distributed averaging against similar graph conditions proposed in the literature. We presented an analysis in terms of time complexity needed for checking these topology constraints. Simulation results have validated the proposed algorithms and convergence conditions. In the future, we will extend our results further to fit real applications.

%
%
%
%

\section*{Acknowledgement}
We offer our sincerest gratitude to Prof. Hideaki Ishii for the time he dedicated to us for useful discussions on the topic as well as his technical comments which significantly helped us to improve the quality of this paper. 

\bibliography{Avrg}
\bibliographystyle{IEEEtran}

\end{document}